\newcommand{\qitem}[1]{\noindent\leavevmode\hangindent10mm%
       \noindent\hbox to10mm{#1\hss}\ignorespaces}
\newenvironment{probl}[1]{\noindent\textsc{#1}\begin{list}{}{
      \setlength{\leftmargin}{20mm}\setlength{\labelwidth}{18mm}%
      \setlength{\itemsep}{2pt plus 1pt}\setlength{\parsep}{\parskip}%
      \setlength{\listparindent}{0pt}\setlength{\topsep}{0pt}}}%
  {\end{list}}
\newcommand{\dege}{\mathop{\operator@font deg}}
\newcommand{\puz}{\mathop{\operator@font puz}}
\begin{document}

\makebcctitle

\begin{abstract}
%\noindent
Many combinatorial problems can be formulated as ``Can I transform
configuration~1 into configuration~2, if certain transformations only are
allowed?''. An example of such a question is: given two $k$-colourings of a
graph, can I transform the first $k$-colouring into the second one, by
recolouring one vertex at a time, and always maintaining a proper
$k$-colouring? Another example is: given two solutions of a SAT-instance,
can I transform the first solution into the second one, by changing the
truth value one variable at a time, and always maintaining a solution of
the SAT-instance? Other examples can be found in many classical puzzles,
such as the 15-Puzzle and Rubik's Cube.

In this survey we shall give an overview of some older and more recent work
on this type of problem. The emphasis will be on the computational
complexity of the problems: how hard is it to decide if a certain
transformation is possible or not?
\end{abstract}

\section{Introduction}\label{sec-intro}

\emph{Reconfiguration problems} are combinatorial problems in which we are
given a collection of configurations, together with some transformation
rule(s) that allows us to change one configuration to another. A classic
example is the so-called \mbox{\emph{15-puzzle}} (see
Figure~\ref{fig:15puzzle}): 15~tiles are arranged on a $4\times4$ grid,
with one empty square; neighbouring tiles can be moved to the empty slot.
The normal aim is, given an initial configuration, to move the tiles to the
position with all numbers in order (right-hand picture in
Figure~\ref{fig:15puzzle}). Readers of a certain age may remember Rubik's
cube and its relatives as examples of reconfiguration puzzles (see
Figure~\ref{fig:Rubik}).

\begin{figure}[ht]
\medskip
  \centering
  \includegraphics[width=40mm]{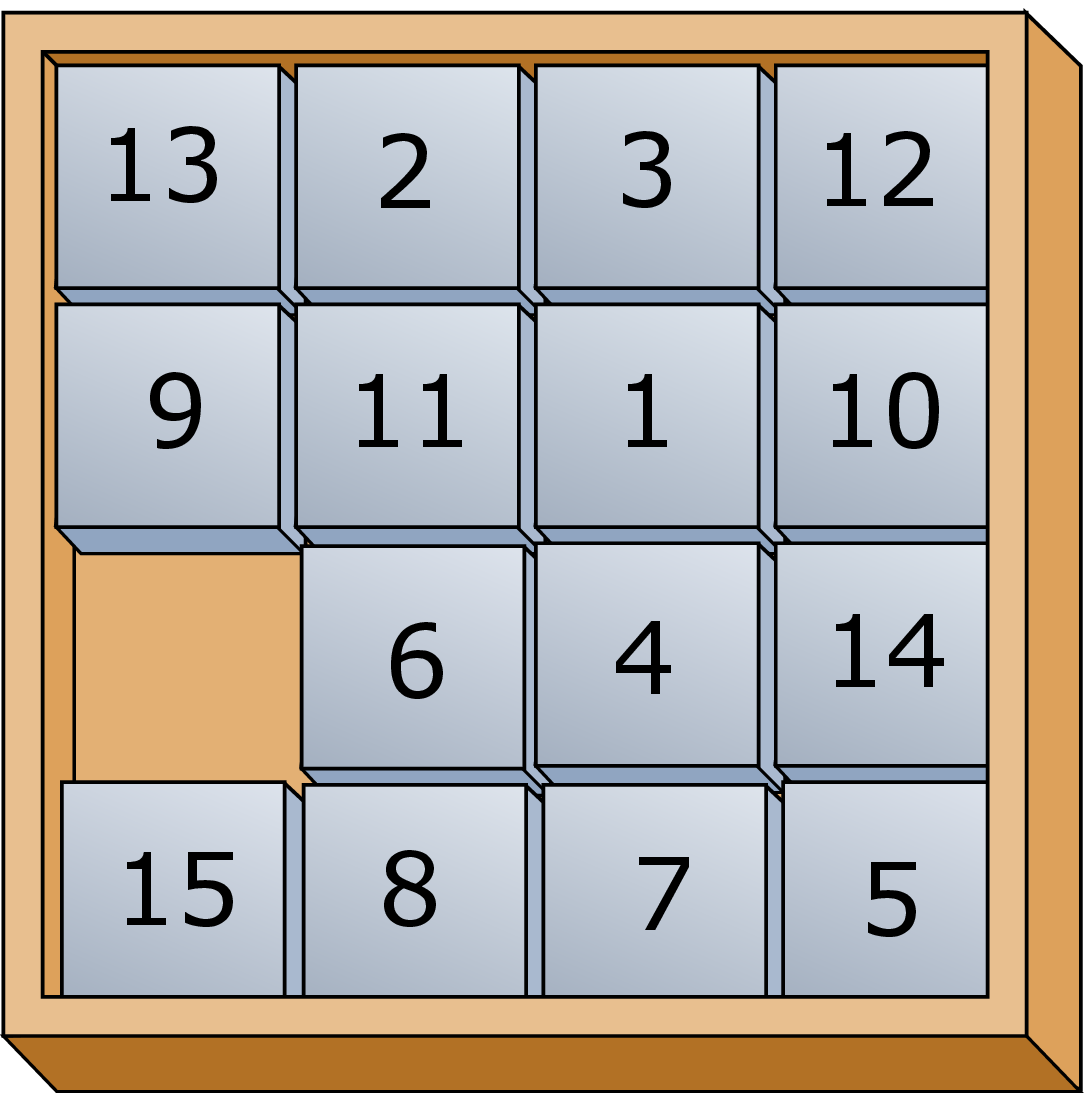}
  \hspace{15mm}
  \includegraphics[width=40mm]{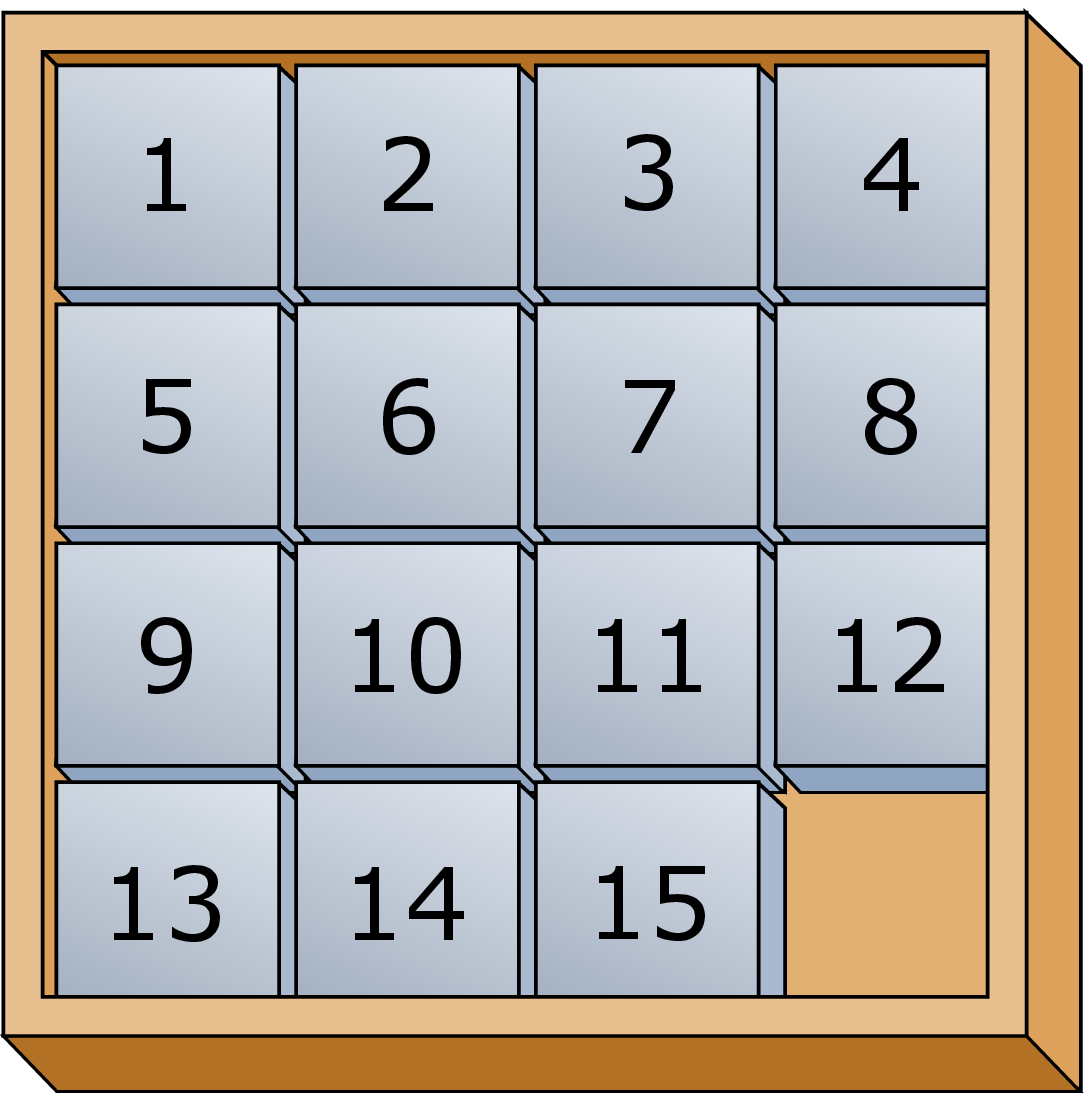}
  \medskip
  \caption[Two configurations of the 15-puzzle]{Two configurations of the
    15-puzzle (left picture \copyright\ 2008 Theon, right picture
    \copyright\ 2006 Booyabazooka; via Wikipedia)}
  \label{fig:15puzzle}
\end{figure}

\begin{figure}[ht]
\medskip
  \centering
  \includegraphics[width=40mm]{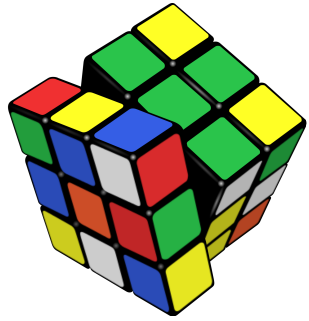}
  \medskip
  \caption{Rubik's cube (\copyright\ 2006 Booyabazooka; via Wikipedia)}
  \label{fig:Rubik}
\end{figure}

More abstract kinds of reconfiguration problems abound in graph theory. For
instance, suppose we are given a planar graph and two 4-colourings of that
graph. Is it possible to transform the first 4-colouring into the second
one, by recolouring one vertex at a time, and never using more than 4
colours? Taking any two different 4-colourings of the complete graph~$K_4$
shows that the answer is not always yes. But what would happen if we
allowed a fifth colour? And whereas it is easy to see what the situation is
with two 4-colourings of~$K_4$, how hard is it to decide in general if two
given 4-colourings of some planar graph can be transformed from one to the
other by recolouring one vertex at a time?

As a final (class of) example in this introduction we mention
reconfiguration problems on satisfiability problems. Given some Boolean
formula and two satisfying assignments of its variables, is it possible to
transform one assignment into the other by changing the value of one
variable at a time, but so that the formula remains TRUE during the whole
sequence of transformations?

\medskip
In this survey we concentrate on \emph{complexity} considerations of
transformation problems. In other words, we are interested in knowing
\emph{how hard} it is computationally to decide if the answer to some
problem involving transformation is yes or no. More specifically, we will
look at two types of those complexity question, which very roughly can be
described as follows:

\bigskip
\begin{probl}{A-to-B-Path}
\item[\normalfont\textit{Instance}:\hfill] Description of a collection of
  feasible configurations;\\
  description of one or more transformations changing one configuration to
  another;\\
  description of two feasible configurations $A,B$.
\item[\normalfont\textit{Question}:\hfill] Is it possible to change
  configuration~$A$ into configuration~$B$ by a sequence of transformations
  in which each intermediate configuration is a feasible configuration as
  well?
\end{probl}

\bigskip
\begin{probl}{Path-between-All-Pairs}
\item[\normalfont\textit{Instance}:\hfill] Description of a collection of
  feasible configurations;\\
  description of one or more transformations changing one configuration to
  another.
\item[\normalfont\textit{Question}:\hfill] Is it possible for any two
  feasible configurations $A,B$ to change configuration~$A$ into
  configuration~$B$ by a sequence of transformations in which each
  intermediate configuration is a feasible configuration as well?
\end{probl}

\bigskip\noindent
Of course, there are many other questions that can be asked: how many steps
does it take to go from one configuration to another? Which two
configurations are furthest apart? Etc., etc. Many of these questions have
been considered for particular problems, and where opportune we shall
mention some of this work.

An alternative way to formulate this type of problem is by using the
concept of a \emph{configuration graph}. This is the graph that has as
vertex set the collection of all possible feasible configurations, while
two configurations are connected by an edge if there is a transformation
changing one to the other. Note that nothing that we have said so far rules
out the possibility that the transformation goes one way only, but in
general we will assume that we can always go back and forth between
configurations. This means the configuration graph can be taken to be an
undirected graph.

Using the language of configuration graphs, the two general decision
problems above can be rephrased as follows.
\textsc{Path-between-All-Pairs}: is the configuration graph connected?
\textsc{A-to-B-Path}: given two vertices (configurations) in the
configuration graph, are they in the same component?

\medskip
In most of this survey we will use fairly informal language. So we may use
``step'' or ``move'' instead of ``transformation'' (a one-step change). On
the other hand, ``transform configuration~$A$ to configuration~$B$'', ``move
from~$A$ to~$B$'' or ``go from~$A$ to~$B$'' usually indicate a sequence of
transformations.

\subsection{A little bit on computational complexity}

This survey cannot give a full definition of the complexity classes we will
encounter, and we only give a general, intuitive, description of some of
them. The interested reader can find all details in appropriate textbooks,
such as Garey~\& Johnson~\cite{GJ} and Papadimitriou~\cite{Pa}.

We assume the reader is familiar with the concept of decision problems
(problems that have as answer either ``yes'' or ``no'') and the complexity
classes~\textbf{P}, \textbf{NP} and~\textbf{coNP}. We will also regularly
encounter the class \textbf{PSPACE}. A decision problem is in
\textbf{PSPACE}, or \emph{can be solved in polynomial space}, if there
exists an algorithm that solves the problem and that uses an amount of
memory that is polynomial in the size of the input. The related
non-deterministic complexity class \textbf{NPSPACE} is similarly defined as
the class of decision problems for which there exists a non-deterministic
algorithm that can recognise ``yes''-instances of the problem using an
amount of memory that is polynomial in the size of the input. For a
non-deterministic algorithm we mean by \emph{recognising ``yes''-instances}
that for every ``yes''-instance (but for none of the ``no''-instances)
there is a possible run of the algorithm that finishes in finite time with
a ``yes'' answer.

We obviously have
$\text{\textbf{P}}\subseteq\text{\textbf{NP}}\cap\text{\textbf{coNP}}$ and
$\text{\textbf{PSPACE}}\subseteq\text{\textbf{NPSPACE}}$, and a little bit
of thought should convince the reader that we also have
$\text{\textbf{NP}}\cup\text{\textbf{coNP}}\subseteq\text{\textbf{PSPACE}}$.
(Trial and error of all possible solutions of a problem in~\textbf{NP}
or~\textbf{coNP} can be done in polynomial space.) For most of these
inclusions it is unknown if they are proper inclusions or if the classes
are in fact the same, leading to some of the most important problems in
computer science (settling whether or not
$\text{\textbf{P}}=\text{\textbf{NP}}$ is worth a million
dollars~\cite{Cl}). The one exception is that we know that \textbf{PSPACE}
and \textbf{NPSPACE} are in fact the same by the celebrated theorem of
Savitch~\cite{Sa}.

Within each complexity class we can define so-called \emph{complete}
problems. Again, we refer to the appropriate textbooks for the precise
definition; for us it is enough intuitively to assume that these are the
``most difficult'' problems in their class.

\subsection{Computational complexity of reconfiguration problems}

In order to be able to ask sensible questions (and obtain sensible answers)
about the complexity of reconfiguration problems, we will make some
assumptions regarding their properties. In particular, when describing the
possible configurations, we will assume that these are not given as a full
set of all configurations, but by some compact description. Otherwise, if
the set of all possible configurations was part of the input, most decision
problems about those configurations would trivially be possible in
polynomial time because the input would be very large.

More precisely, we assume that an instance of the input contains an
algorithm to decide if a candidate configuration really is feasible or not.
Similarly, we are in general not interested in problems where the
collection of possible transformations needs to be given (in the form of a
list of all pairs that are related by the transformation). Instead, we
assume that the input contains an algorithm to decide, given two
configurations, whether or not we can get the second configuration from the
first by a single transformation.

Regarding these algorithmic issues of the description of an instance of a
configuration problem, we make the following assumptions:

\medskip
\begin{list}{}{
    \setlength{\leftmargin}{10mm}\setlength{\labelwidth}{8mm}%
    \setlength{\itemsep}{2pt plus 1pt}\setlength{\parsep}{\parskip}%
    \setlength{\listparindent}{0pt}\setlength{\topsep}{0pt}}
\item[\normalfont{A1}:\hfill] Deciding if a given possible configuration is
  a feasible configuration can be done in polynomial time.
\item[\normalfont{A2}:\hfill] Given two feasible configurations, deciding
  if there is a transformation from the first to the second can be done in
  polynomial time.
\end{list}

\medskip\noindent
Note that these assumptions guarantee that both of our general
reconfiguration problems are in \textbf{NPSPACE} (hence in
\textbf{PSPACE} by Savitch's Theorem). The following is a non-deterministic
algorithm for \textsc{A-to-B-Path} that would work in polynomial space,
when required to decide if it is possible to have a sequence of
transformations from configuration~$A$ to configuration~$B$:

\medskip
\begin{list}{}{
    \setlength{\leftmargin}{10mm}\setlength{\labelwidth}{8mm}%
    \setlength{\itemsep}{2pt plus 1pt}\setlength{\parsep}{\parskip}%
    \setlength{\listparindent}{0pt}\setlength{\topsep}{0pt}}
\item[\normalfont{1}:\hfill] Given the initial configuration~$A$,
  ``guess'' a next configuration~$A_1$. Check that~$A_1$ is indeed a
  feasible configuration and that there is transformation from~$A$
  to~$A_1$. If~$A_1$ is a valid next configuration, ``forget'' the initial
  configuration~$A$ and replace it by~$A_1$.
\item[\normalfont{2}:\hfill] Repeat the step above until the target
    configuration~$B$ has been reached.
\end{list}

\medskip\noindent
If there is indeed a way to go from configuration~$A$ to
configuration~$B$, then a sequence of correct guesses in the algorithm
above will indeed recognise that, using a polynomial amount of space; while
if there is no sequence of transformations from~$A$ to~$B$, then the
algorithm will never finish. To extend the algorithm to the
\textsc{Path-between-All-Pairs} problem, we just need to repeat this task
for all possible pairs. This means systematically generating all pairs of
candidate configurations, and testing those. Since each candidate
configuration has a size that is polynomial in the size of the original
input (as it can be tested in polynomial time whether or not a candidate
configuration is feasible), this brute-force generation of all possible
pairs of configurations and testing whether or not they are feasible and
connected can be done in a polynomial amount of memory as well.

The fact that all problems we consider are ``automatically'' in
\textbf{PSPACE} means that we are in particular interested in determining
if a particular variant is in a more restricted class (\textbf{P},
\textbf{NP}), or if it is in fact \textbf{PSPACE}-complete.

\medskip
A final property that all examples we look at will have is that the
transformations are symmetric: if we can transform one configuration into
another, then we can also go the other way round. There is no real reason
why this symmetry should always be the case, it is just that most
reconfiguration problems considered in the literature have this feature.
In particular, the sliding token problems we look at in
Section~\ref{sec-tok} could just as well be formulated for directed graphs,
leading in general to directed reconfiguration problems.

\section{Reconfiguration of satisfiability problems}\label{sec-sat}

A collection of interesting results regarding the reconfiguration of
solutions of a given Boolean formula were obtained by Gopalan et
al.~\cite{GK1,GK2}. They considered the following general set-up. Given a
Boolean formula~$\varphi$ with~$n$ Boolean variables, the feasible
configurations are those assignments from $\{T,F\}^n$ that
satisfy~$\varphi$ (i.e., for which~$\varphi$ gives the value TRUE); the
allowed transformation is changing the value of exactly one of the
variables.

The collection of all possible assignments, together with edges added
between those pairs that differ in exactly one variable, gives us the
structure of the (graph of the) $n$-dimensional hypercube. This means that
the configuration graph for a Boolean formula reconfiguration problem is
the induced subgraph of the hypercube induced by the satisfying
assignments. It is this additional structure that should give hope of a
better understanding of this type of reconfiguration problem.

The first to analyse the connectivity properties of the configuration
graphs of this type of problem were Gopalan et al.~\cite{GK1,GK2}. To
describe their results, we need a few more definitions.

A \emph{logical relation}~$R$ is a subset of $\{T,F\}^k$, where~$k$ is the
\emph{arity} of~$R$. For instance, if $R_{1/3}=\{TFF,FTF,FFT\}$, then
$R_{1/3}(x_1,x_2,x_3)$ is TRUE if and only if exactly one of $x_1,x_2,x_3$
is~$T$. For~$\mathcal{S}$ a finite set of logical relations, a
\emph{CNF($\mathcal{S}$)-formula} over a set of variables
$V=\{x_1,x_2,\ldots,x_n\}$ is a finite conjunction $C_1\wedge
C_2\wedge\dots\wedge C_m$ of clauses built using relations
from~$\mathcal{S}$, variables from~$V$, and the constants~$T$ and~$F$.
Hence each~$C_i$ is an expression of the form
$R(\xi_1,\xi_2,\ldots,\xi_k)$, where~$R$ is a relation of arity~$k$, and
each~$\xi_j$ is a variable from~$V$ or one of the Boolean constants $T,F$.
The satisfiability problem \textsc{Sat($\mathcal{S}$)} associated with a
finite set of logical relations~$\mathcal{S}$ asks: given a
CNF($\mathcal{S}$)-formula~$\varphi$, is it satisfiable?

As an example, if we use the relation~$R_{1/3}$ as above, and set
$\mathcal{S}_{1/3}=\{R_{1/3}\}$, then CNF($\mathcal{S}_{1/3}$) consists of
Boolean expressions of the form
\[\varphi=(x_i\vee x_j\vee x_k)\wedge(x_{i'}\vee x_{j'}\vee x_{k'})\wedge
\cdots.\]
Finally, such an expression~$\varphi$ is true for some assignment from
$\{T,F\}$ to the variables if and only if each clause $x_i\vee x_j\vee x_k$
has exactly one variable that is~$T$. The satisfiability problem
\textsc{Sat($\mathcal{S_{1/3}}$)} is known as \textsc{Positive-1-In-3-Sat};
a decision problem that is \textbf{NP}-complete~\cite{Sc}.

Another, better known, example is obtained by taking
$\mathcal{S}_2=\{\,\{TF,FT,TT\},$ $\{FF,FT,TT\},\,\{FF,TF,FT\}\,\}$. Here
the relation $R=\{TF,FT,TT\}$ indicates that $R(x_1,x_2)$ is TRUE when at
least one of $x_1,x_2$ is~$T$, hence it represents clauses of the form
$x_1\vee x_2$. Similarly, $\{FF,FT,TT\}$ represents clauses
$\neg x_1\vee x_2$, and $\{FF,TF,FT\}$ represents clauses
$\neg x_1\vee\neg x_2$. We see that CNF($\mathcal{S}_2$) is exactly the set
of Boolean expressions that can be formulated with clauses that are
disjunctions of two literals. (A \emph{literal} is one of~$x_1$
or~$\neg x_i$ for a variable $x_i\in V$.) We call such a formula a
\emph{2-CNF-formula}.

Similarly, the \emph{3-CNF-formulas} are exactly those formulas whose set
of relations is $\mathcal{S}_3=\{R_0,R_1,R_2,R_3\}$, where
\[\begin{array}{@{}c@{}}
  R_0\:=\:\{T,F\}^3\setminus\{FFF\},\quad
  R_1\:=\:\{T,F\}^3\setminus\{TFF\},\\
  R_2\:=\:\{T,F\}^3\setminus\{TTF\},\quad
  R_3\:=\:\{T,F\}^3\setminus\{TTT\}.\end{array}\]
Note that this also means that \textsc{Sat($\mathcal{S}_2$)} and
\textsc{Sat($\mathcal{S}_3$)} are equivalent to the well-known
\textsc{2-SAT} and \textsc{3-SAT} decision problems, respectively.

\medskip
Schaefer~\cite{Sc} proved a celebrated dichotomy theorem about the
complexity of \textsc{Sat($\mathcal{S}$)}: for certain sets~$\mathcal{S}$
(nowadays called \emph{Schaefer sets}), \textsc{Sat($\mathcal{S}$)} is
solvable in polynomial time; while for all other sets~$\mathcal{S}$ the
problem is \textbf{NP}-complete.

In~\cite{GK1,GK2}, the following two decision problems are considered for
given~$\mathcal{S}$.

\bigskip
\begin{probl}{st-Conn($\mathcal{S}$)}
\item[\normalfont\textit{Instance}:\hfill] A
  CNF($\mathcal{S}$)-formula~$\varphi$, and two satisfying
  assignments~$\textbf{s}$ and~$\textbf{t}$ of~$\varphi$.
\item[\normalfont\textit{Question}:\hfill] Is there a path
  between~$\textbf{s}$ and~$\textbf{t}$ in the configuration graph of
  solutions of~$\varphi$?
\end{probl}

\bigskip
\begin{probl}{Conn($\mathcal{S}$)}
\item[\normalfont\textit{Instance}:\hfill] A
  CNF($\mathcal{S}$)-formula~$\varphi$.
\item[\normalfont\textit{Question}:\hfill] Is the configuration graph of
  solutions of~$\varphi$ connected?
\end{probl}

\bigskip\noindent
The key concept for these problems appears to be that of a \emph{tight} set
of relations~$\mathcal{S}$ -- see~\cite{GK1,GK2} for a precise definition
of this concept. Here we only note that every Schaefer set is tight.

\begin{theorem}[\normalfont Gopalan et
  al.~\cite{GK1,GK2}]\label{GK-th1}\quad\\*
  Let~$\mathcal{S}$ be a finite set of logical relations.

  \qitem{(a)} If~$\mathcal{S}$ is tight, then
  \textsc{st-Conn($\mathcal{S}$)} is in~{\normalfont\textbf{P}}; otherwise,
  \textsc{st-Conn($\mathcal{S}$)} is {\normalfont\textbf{PSPACE}}-complete.

  \qitem{(b)} If~$\mathcal{S}$ is tight, then \textsc{Conn($\mathcal{S}$)}
  is in~{\normalfont\textbf{coNP}}; if it is tight but not Schaefer, then
  it is {\normalfont\textbf{coNP}}-complete; otherwise, it is
  {\normalfont\textbf{PSPACE}}-complete.

  \qitem{(c)} If every relation~$R$ in~$\mathcal{S}$ is the set of
  solutions of a 2-CNF-formula, then \textsc{Conn($\mathcal{S}$)} is
  in~{\normalfont\textbf{P}}.
\end{theorem}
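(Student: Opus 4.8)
\section*{Proof proposal}

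The plan is to organise the argument around the nested classification of relation sets, $\text{bijunctive}\subseteq\text{Schaefer}\subseteq\text{tight}$, and to separate the membership (easy) directions from the completeness (hard) directions in each part. I would begin with part~(c), which is the most self-contained. A relation is the solution set of a 2-CNF-formula precisely when it is closed under the coordinatewise majority (median) operation; hence when every relation in~$\mathcal{S}$ is 2-CNF-definable, the whole solution set of any CNF($\mathcal{S}$)-formula~$\varphi$ is median-closed. The key observation is that a median-closed subset of the hypercube induces a connected subgraph: given two solutions $s,t$ differing in a coordinate set~$D$, median closure certifies, at each step, a single-coordinate flip of~$D$ that keeps us inside the solution set, producing a path in the configuration graph (equivalently, such solution sets induce median graphs, which embed isometrically into the hypercube and are therefore connected). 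Thus for bijunctive~$\mathcal{S}$ the configuration graph is always connected, vacuously so when~$\varphi$ has no solution, and \textsc{Conn($\mathcal{S}$)} can be answered ``yes'' immediately, so it is in~\textbf{P}.

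For the easy direction of part~(a), I would prove that \textsc{st-Conn($\mathcal{S}$)} is in~\textbf{P} whenever~$\mathcal{S}$ is tight. The strategy is structural: tightness means every relation is componentwise bijunctive, OR-free, or NAND-free, and I would show that under any of these conditions the connected component of a solution admits a succinct, efficiently computable description. Concretely, I would identify the ``frozen'' coordinates of a solution~$s$ (those forced to a fixed value throughout~$s$'s component) by a constraint-propagation/closure computation, and argue that $s$ and~$t$ lie in the same component if and only if they agree on all frozen coordinates and the non-frozen coordinates can be reconciled by local flips. The point is that we need only decide reachability, not exhibit a path, so even when a component has exponential diameter (which can occur in the tight-but-not-Schaefer case) the decision remains polynomial.

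The hard direction of part~(a) --- non-tight~$\mathcal{S}$ yields \textbf{PSPACE}-completeness --- is the main obstacle and the technical heart of the argument. Membership in \textbf{PSPACE} is already guaranteed by assumptions A1--A2 together with Savitch's Theorem, so only hardness remains. I would reduce from a canonical \textbf{PSPACE}-complete reconfiguration problem (for instance a space-bounded Turing-machine acceptance problem recast as reachability between configurations, or a suitable quantified-formula variant). The crux is that a non-tight relation fails all three structural conditions simultaneously, and I would exploit this failure to build a gadget simulating a single transition of the computation while forcing every intermediate assignment to remain satisfying; chaining such gadgets encodes an arbitrary space-bounded computation, so deciding whether~$s$ reaches~$t$ decides acceptance. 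Designing gadgets that work uniformly from the mere combinatorial failure of tightness, rather than from one concrete relation, is where the delicate case analysis lives.

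Finally I would assemble part~(b) from the preceding pieces. For membership, when~$\mathcal{S}$ is tight a disconnected configuration graph is certified by a single pair $(s,t)$ of satisfying assignments lying in different components: verifying the certificate means checking via A1 that~$s,t$ satisfy~$\varphi$ and running the polynomial-time \textsc{st-Conn} algorithm of part~(a) to confirm they are not connected, so \textsc{Conn($\mathcal{S}$)} is in \textbf{coNP}. For the tight-but-not-Schaefer case I would prove \textbf{coNP}-hardness by reducing from the complement of \textsc{Sat($\mathcal{S}$)}, which is \textbf{NP}-complete by Schaefer's dichotomy exactly when~$\mathcal{S}$ is non-Schaefer; the reduction augments a given formula~$\psi$ with a gadget that forces two mutually unreachable solutions precisely when~$\psi$ is satisfiable, so the configuration graph is connected if and only if~$\psi$ is unsatisfiable. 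For non-tight~$\mathcal{S}$, \textbf{PSPACE}-hardness follows by adapting the part-(a) reduction so that an unreachable target pair forces global disconnectedness, again yielding \textbf{PSPACE}-completeness.
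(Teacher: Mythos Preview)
Your argument for part~(c) contains a genuine error. You claim that a median-closed subset of the hypercube always induces a connected subgraph, and hence that the solution space of any 2-CNF formula is connected, making \textsc{Conn($\mathcal{S}$)} trivially answerable by ``yes''. This is false: take the 2-CNF formula $(x\vee y)\wedge(\neg x\vee\neg y)$, whose solution set is $\{TF,FT\}$. This set is closed under coordinatewise majority (any triple from a two-element set has majority equal to the repeated element), yet the two solutions are at Hamming distance~2 with no satisfying assignment between them, so the configuration graph is disconnected. Thus for bijunctive~$\mathcal{S}$ the answer to \textsc{Conn($\mathcal{S}$)} is \emph{not} always ``yes''; one needs an actual polynomial-time decision procedure, and your proposed argument does not supply one.

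A related slip appears in your treatment of part~(a): you remark that in the tight-but-not-Schaefer case components may have exponential diameter. In fact the opposite holds --- for tight~$\mathcal{S}$ any two connected solutions are joined by a path of length $O(n)$ (this is Theorem~\ref{GK-th2}(a) of the paper), and this linear bound is precisely what underlies the polynomial-time reachability algorithm.

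For the hardness direction your plan diverges from the approach the paper describes. Rather than building ad hoc gadgets from a single non-tight relation to simulate a space-bounded computation, Gopalan et al.\ proceed in the spirit of Schaefer's dichotomy: they introduce a notion of \emph{structural expressibility} (expressibility via auxiliary variables that additionally respects the component structure of the solution space) and prove that every non-tight~$\mathcal{S}$ can structurally express \emph{all} logical relations. Hardness then follows by structurally expressing a fixed relation for which \textsc{st-Conn} is already known to be \textbf{PSPACE}-hard. Your direct-gadget route is not a priori impossible, but the expressibility framework is what makes the argument uniform over all non-tight~$\mathcal{S}$, and it is the ``crucial result'' the paper singles out.
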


\noindent
Major parts of the proof of Theorem~\ref{GK-th1} in~\cite{GK1,GK2} follow a
similar strategy to the proof of Schaefer's Theorem in~\cite{Sc}. Given a
set of relations~$\mathcal{S}$, a $k$-ary relation~$R$ is \emph{expressible
  from~$S$} if there is a CNF($\mathcal{S}$)-formula
$\varphi(x_1,\ldots,x_k,z_1,\ldots,z_m)$ such that~$R$ coincides with the
set of all assignments to $x_1,\ldots,x_k$ that satisfy
$(\exists z_1)\,\cdots\,(\exists z_m)\,
\varphi(x_1,\ldots,x_k,z_1,\ldots,z_m)$. Then the essential part of the
proof of Schaefer's Theorem is that if a set~$\mathcal{S}$ of relations is
not-Schaefer, then every logical relation is expressible
from~$\mathcal{S}$.

The authors in~\cite{GK1,GK2} extend the concept of expressibility to
\emph{structurally expressibility}. Informally, a relation~$R$ is
structurally expressible from a set of relations~$\mathcal{S}$, if~$R$ is
expressible using some CNF($\mathcal{S}$)-formula
$\varphi(x_1,\ldots,x_k,z_1,\ldots,z_m)$ and if the subgraph of the
hypercube formed by the satisfying assignments of~$\varphi$ has components
that `resemble' the components formed by the subgraph of the hypercube of
the relations in~$R$. The crucial result is that if a set of
relations~$\mathcal{S}$ is not tight, then every logical relation is
structurally expressible from~$\mathcal{S}$. Although the outline of this
(part of the) proof is very similar to the outline of the corresponding part
of Schaefer's Theorem, the actual proof is considerably more involved.

Additionally, the following results on the structure of the configuration
graphs of the solutions of different CNF($\mathcal{S}$)-formula are
obtained in~\cite{GK1,GK2}.

\begin{theorem}[\normalfont Gopalan et
  al.~\cite{GK1,GK2}]\label{GK-th2}\quad\\*
  Let~$\mathcal{S}$ be a finite set of logical relations.

  \qitem{(a)} If~$\mathcal{S}$ is tight, then for any
  CNF($\mathcal{S}$)-formula~$\varphi$ with~$n$ variables, if two
  satisfying assignments~$\textbf{s}$ and~$\textbf{t}$ of~$\varphi$ are
  connected by a path, then the number of transformations needed to go
  from~$\textbf{s}$ to~$\textbf{t}$ is $O(n)$.

  \qitem{(b)} If~$\mathcal{S}$ is not tight, then there exists an
  exponential function~$f(n)$ such that for every~$n_0$ there exists a
  CNF($\mathcal{S}$)-formula~$\varphi$ with $n\ge n_0$ variables and two
  satisfying assignments~$\textbf{s}$ and~$\textbf{t}$ of~$\varphi$ that
  are connected by a path, but where the number of transformations needed
  to go from~$\textbf{s}$ to~$\textbf{t}$ is at least~$f(n)$.
\end{theorem}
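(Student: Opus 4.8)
The plan is to handle parts (a) and (b) separately, exploiting the trichotomy underlying tightness: a set~$\mathcal{S}$ is tight precisely when all of its relations are simultaneously componentwise bijunctive, or all OR-free, or all AND-free. Part~(a) is an upper bound on the diameter of each connected component of the solution graph, and part~(b) is a matching lower-bound construction. For the componentwise bijunctive case of~(a), I would recall that such a relation is closed under coordinatewise majority \emph{within each of its components}, and then prove that every component~$F$ of the solution graph of~$\varphi$ is \emph{geodesically connected}. Given $\mathbf{s},\mathbf{t}\in F$ joined by a path $\mathbf{p}_0,\dots,\mathbf{p}_r$, set $\mathbf{q}_i=\mathrm{maj}(\mathbf{s},\mathbf{p}_i,\mathbf{t})$. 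The key observations are: consecutive $\mathbf{q}_i$ differ in at most one coordinate; each $\mathbf{q}_i$ lies in the Hamming interval between $\mathbf{s}$ and $\mathbf{t}$; and each $\mathbf{q}_i$ is itself a solution, because the projection of the original path onto any clause's variables is a walk that never leaves a single component of that clause's relation, so local majority closure applies and majority commutes with projection. Discarding stationary steps yields a solution-path of length $d_H(\mathbf{s},\mathbf{t})\le n$, giving the $O(n)$ bound.

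For the OR-free and AND-free cases of~(a), the structure is different and one cannot rely on geodesic paths. Here I would use the structural lemma that each component contains a unique coordinatewise-minimal (respectively maximal) solution, reachable from any solution in the component by a monotone sequence of single-bit flips of length at most~$n$ --- the generalization to OR-free relations of the familiar coordinatewise-$\mathrm{AND}$ (min) closure of Horn formulas. To connect $\mathbf{s}$ to $\mathbf{t}$ one then routes through this extremal element, producing a path of length at most $2n=O(n)$. Establishing this unique-extremal-element lemma in the full OR-free/AND-free generality is the delicate part of~(a).

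For part~(b), I would invoke the structural expressibility result stated above: a non-tight~$\mathcal{S}$ structurally expresses every logical relation, and structural expressibility preserves the component structure of solution graphs up to a polynomial blow-up in the number of variables. It therefore suffices to exhibit one fixed family of CNF formulas, over any convenient set of relations, whose solution graph has a connected component of exponential diameter, and then transfer it back to~$\mathcal{S}$. I would build such a family by chaining gadgets that force the satisfying assignments of a single component to traverse the hypercube like a binary counter (or Gray code): two designated solutions $\mathbf{s},\mathbf{t}$ remain connected, but every bit-flip path between them is compelled to realize all $2^{\Theta(n)}$ intermediate counter states. Pulling this component back through structural expressibility embeds it in the solution graph of a CNF($\mathcal{S}$)-formula, and since the parameter blow-up is only polynomial, an exponential lower bound on the path length survives as the required exponential~$f(n)$, for arbitrarily large~$n$.

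The main obstacle is twofold. In~(a) it is the OR-free/AND-free extremal-element lemma, which has no clean majority-closure shortcut. In~(b), the genuinely hard part is designing the exponential-diameter gadget and, crucially, verifying that \emph{structural} (as opposed to plain) expressibility faithfully transports the long-path lower bound: one must ensure the reduction preserves distances and connectivity inside the relevant component, rather than merely reproducing the target relation as a set of tuples.
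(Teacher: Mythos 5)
This survey states Theorem~\ref{GK-th2} without proof: it is quoted from Gopalan et al.~\cite{GK1,GK2}, and the only in-paper hint at the argument is the remark after Theorem~\ref{GK-th1} that the non-tight case rests on structural expressibility. Measured against the cited original, your plan follows essentially the same route: the trichotomy of tightness, coordinatewise-majority closure (applied clause by clause under projection) for the componentwise bijunctive case, the unique minimal/maximal solution with monotone paths for the OR-free/AND-free cases (routing $\mathbf{s}\to\mathbf{t}$ through the extremal solution for a $2n$ bound), and, for part~(b), transfer of an exponential-diameter base family through structural expressibility, whose defining conditions (connected witness spaces, adjacency lifting) exist precisely to guarantee the distance-and-connectivity preservation you rightly flag as the crucial verification. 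Two small points to tighten in an actual write-up: first, discarding stationary steps from the sequence $\mathrm{maj}(\mathbf{s},\mathbf{p}_i,\mathbf{t})$ yields a walk inside the Hamming interval, not automatically a path of length $d_H(\mathbf{s},\mathbf{t})$; the clean fix is induction on Hamming distance, using the first nonstationary step of the walk to reduce the distance to $\mathbf{t}$ by one. Second, the expressibility substitution introduces polynomially many auxiliary variables, so a base bound of $2^{\Omega(m)}$ on $m$ variables degrades to one of the form $2^{\Omega(n^{\varepsilon})}$ on the final $n$ variables; this still meets the theorem's loosely stated ``exponential $f(n)$'' (and matches what \cite{GK2} actually proves), but the blow-up should be stated rather than silently absorbed.
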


\noindent
Regarding the result in Theorem~\ref{GK-th1}\,(b), in their original
paper~\cite{GK1} the authors in fact conjectured a trichotomy for the
complexity of \textsc{Conn($\mathcal{S}$)}, conjecturing that
if~$\mathcal{S}$ is Schaefer, then \textsc{Conn($\mathcal{S}$)} is actually
in~$\textbf{P}$. They showed this conjecture to be true for particular
types of Schaefer set (see Theorem~\ref{GK-th1}\,(b) for one example). The
conjecture was disproved (assuming
$\text{\textbf{P}}\ne\text{\textbf{NP}}$) by Makino et al.~\cite{MT}, who
found a set of Schaefer relations for which \textsc{Conn($\mathcal{S}$)}
remains \textbf{coNP}-complete. In the updated version~\cite{GK2}, a
modified trichotomy conjecture for the complexity of
\textsc{Conn($\mathcal{S}$)} is formulated.

\section{Reconfiguration of graph colourings}\label{sec-col}

Reconfiguration of different kinds of graph colourings is probably one of
the best studied examples of reconfiguration problems. We will look at some
particular variants. The required background in the basics of graph theory
can be found in any textbook on graph theory, such as Bondy~\&
Murty~\cite{BM} or Diestel~\cite{Di}.

\subsection{Single-vertex recolouring of vertex colourings}

Recall that a \emph{$k$-colouring} of a graph $G=(V,E)$ is an assignment
$\varphi:V\rightarrow\{1,\ldots,k\}$ such that $\varphi(u)\ne\varphi(v)$
for every edge $uv\in E$. A graph is \emph{$k$-colourable} if it has a
$k$-colouring.

We start by considering the case where we are allowed to recolour one
vertex at a time, while always maintaining a valid $k$-colouring. We
immediately get the following two reconfiguration problems, for a fixed
positive integer~$k$.

\pagebreak[3]
\bigskip
\begin{probl}{$k$-Colour-Path}
\item[\normalfont\textit{Instance}:\hfill] A graph~$G$ together with
  two $k$-colourings~$\alpha$ and~$\beta$.
\item[\normalfont\textit{Question}:\hfill] Is it possible to transform the
  first colouring~$\alpha$ into the second colouring~$\beta$ by recolouring
  one vertex at a time, while always maintaining a valid $k$-colouring?
\end{probl}

\bigskip
\begin{probl}{$k$-Colour-Mixing}
\item[\normalfont\textit{Instance}:\hfill] A graph~$G$.
\item[\normalfont\textit{Question}:\hfill] Is it possible, for any two
  $k$-colourings of~$G$, to transform the first one into the second one by
  recolouring one vertex at a time, while always maintaining a valid
  $k$-colouring?
\end{probl}

\bigskip\noindent
Let us call a graph~$G$ \emph{$k$-mixing} if the answer to the second
decision problem is yes. The use of the work ``mixing'' in this context
derives from its relation with work on rapid mixing of Markov chains to
sample combinatorial configurations; we say more about this in the final
Section~\ref{sec-appl}.

A variant of graph colouring is \emph{list-colouring}. Here we assume that
each vertex~$v$ of a graph~$G$ has its own \emph{list}~$L(v)$ of colours.
An \emph{$L$-colouring} of the vertices is an assignment
$\varphi:V\longrightarrow\bigcup_{v\in V}L(v)$ such that
$\varphi(v)\in L(v)$ for each vertex $v\in V$, and
$\varphi(u)\ne\varphi(v)$ for every edge $uv\in E$. We call such a
colouring a \emph{$k$-list-colouring} if each list~$L(v)$ contains at
most~$k$ colours.

Regarding a transformation of a list-colouring, we use the obvious choice:
recolour one vertex at a time, where the new colour must come from the list
for that vertex list. With all this, we say that a graph~$G$ with list
assignments~$L$ is \emph{$L$-list-mixing} if for every two
$L$-colourings~$\alpha$ and~$\beta$, we can transform~$\alpha$ into~$\beta$
by recolouring one vertex at a time, while always maintaining a valid
$L$-colouring.

We also have two related decision problems.

\bigskip
\begin{probl}{$k$-List-Colour-Path}
\item[\normalfont\textit{Instance}:\hfill] A graph~$G$, list
  assignments~$L(v)$ with $|L(v)|\le k$ for each vertex $v\in V$, and two
  $L$-colourings~$\alpha$ and~$\beta$.
\item[\normalfont\textit{Question}:\hfill] Is it possible to transform the
  first colouring~$\alpha$ into the second colouring~$\beta$ by recolouring
  one vertex at a time, while always maintaining a valid $L$-colouring?
\end{probl}

\bigskip
\begin{probl}{$k$-List-Colour-Mixing}
\item[\normalfont\textit{Instance}:\hfill] A graph~$G$ and list
  assignments~$L(v)$ with $|L(v)|\le k$ for each vertex $v\in V$.
\item[\normalfont\textit{Question}:\hfill] Is~$G$ $L$-list-mixing?
\end{probl}

\bigskip\noindent
Before we look in some detail at what is known about the decision problems
defined in this subsection, we give some other results on (list-)mixing.
The following result has been obtained independently several times, but the
first instance appears to be in (preliminary versions) of Dyer et
al.~\cite{DF}. The \emph{degeneracy $\dege(G)$} of a graph~$G$ is the
minimum integer~$d$ so that every subgraph of~$G$ has a vertex of degree at
most~$d$. In other words, $\dege(G)$ is the maximum, over all subgraphs~$H$
of~$G$, of the minimum degree of~$H$.

\begin{theorem}[\normalfont Dyer et al.~\cite{DF}]\label{DF-th1}\quad\\*
  For any graph~$G$, if $k\ge\dege(G)+2$, then~$G$ is $k$-mixing. In fact,
  $G$ is $L$-list-mixing for any list assignment~$L$ such that
  $|L(v)|\ge\dege(G)+2$ for all $v\in V$.
\end{theorem}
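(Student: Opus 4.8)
The plan is to prove the stronger $L$-list-mixing statement by induction on $|V(G)|$, exploiting the degeneracy ordering; the $k$-mixing claim then follows at once by taking every list to be $\{1,\dots,k\}$ with $k\ge\dege(G)+2$. Write $d=\dege(G)$. The base case is a single vertex $v$ with $|L(v)|\ge2$, where any two colours can be swapped by one move, so $G$ is trivially $L$-list-mixing.

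For the inductive step I would first invoke the definition of degeneracy to pick a vertex $v$ of minimum degree in $G$, so that $\deg_G(v)\le d$. Set $G'=G-v$. Since removing a vertex cannot increase degeneracy, $\dege(G')\le d$, and the hypothesis $|L(w)|\ge d+2\ge\dege(G')+2$ persists for all $w\in V(G')$. Hence by the induction hypothesis $G'$ is $L$-list-mixing.

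Given two $L$-colourings $\alpha,\beta$ of $G$, restrict them to $L$-colourings $\alpha',\beta'$ of $G'$ and take, by the inductive hypothesis, a recolouring sequence transforming $\alpha'$ into $\beta'$ in $G'$. The heart of the argument is to \emph{lift} this sequence to $G$ while keeping $v$ validly coloured throughout. I would scan the $G'$-moves in order, and just before a move recolours a neighbour $w$ of $v$ to a colour $c$ equal to $v$'s current colour, first recolour $v$ to a different colour. This insertion is always possible because $v$ has at most $d$ neighbours while $|L(v)|\ge d+2$, so at every moment at least $(d+2)-d=2$ colours of $L(v)$ are unused by $v$'s neighbours; thus $v$ can always step off its current colour onto another available one. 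After the whole sequence is processed, $G'$ is coloured as $\beta'$, and since $\beta$ is proper the colour $\beta(v)$ is absent from the neighbourhood of $v$, so a single final move recolours $v$ to $\beta(v)$ and produces $\beta$. This transforms $\alpha$ into $\beta$ and closes the induction.

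The routine parts are the base case and the verification that every colouring produced along the way is valid. The hard part will be making the lifting bookkeeping precise: confirming that the free-colour count $\ge2$ holds at exactly the moment each $v$-move is needed (namely before $w$ has taken colour $c$, so that $c$ is still free at $v$ and at least one \emph{other} free colour remains), and that inserting the $v$-move creates no conflict either at $v$ or at $w$ (the conflict $vw$ being the only edge present in $G$ but absent in $G'$). Once this is checked, each neighbour move of $G'$ is lifted to at most two moves of $G$, and the argument is complete.
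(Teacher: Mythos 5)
Your proposal is correct and follows essentially the same argument as the paper: induction on the number of vertices, deleting a vertex $v$ of degree at most $\dege(G)$, lifting the recolouring sequence of $G-v$ by inserting an extra move of $v$ whenever a neighbour's move is blocked (using the count $|L(v)|\ge\dege(G)+2$ to find a free colour distinct from the conflicting one), and finishing with one last move of $v$. The only difference is presentational -- you prove the list version directly and deduce $k$-mixing, whereas the paper proves $k$-mixing and remarks that the list version is essentially the same.
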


\begin{proof}
  We only prove the $k$-mixing statement; the proof of the list version is
  essentially the same.

  We use induction on the number of vertices of~$G$. The result is
  obviously true for the graph with one vertex, so suppose~$G$ has at least
  two vertices. Let~$v$ be a vertex with degree at most $\dege(G)$, and
  consider $G'=G-v$. Note that $\dege(G')\le\dege(G)$, hence we also have
  $k\ge\dege(G')+2$, and by induction we can assume that~$G'$ is
  $k$-mixing.

  Take two $k$-colourings~$\alpha$ and~$\beta$ of~$G$, and let
  $\alpha',\beta'$ be the $k$-colourings of~$G'$ induced by $\alpha,\beta$.
  Since~$G'$ is $k$-mixing, there exists a sequence
  $\alpha'=\gamma'_0,\gamma'_1,\ldots,\gamma'_N=\beta'$ of $k$-colourings
  of~$G'$ so that two consecutive colourings $\gamma'_{i-1},\gamma'_i$,
  $i=1,\ldots,N$, differ in the colour of one vertex, say~$v_i$. Set
  $c_i=\gamma'_i(v_i)$, the colour of~$v_i$ after recolouring. We now try
  to take the same recolouring steps to recolour~$G$, starting
  from~$\alpha$. If for some~$i$ it is not possible to recolour~$v_i$, this
  must be because~$v_i$ is adjacent to~$v$ and~$v$ at that moment has
  colour~$c_i$. But because ~$v$ has degree at most $\dege(G)\le k-2$,
  there is a colour $c\ne c_i$ that does not appear on any of the
  neighbours of~$v$. Hence we can first recolour~$v$ to~$c$, then
  recolour~$v_i$ to~$c_i$, and continue.

  In this way we find a sequence of $k$-colourings of~$G$, starting
  at~$\alpha$, and ending in a colouring in which all the vertices except
  possibly~$v$ will have the same colour as in~$\beta$. But then, if
  necessary, we can do a final recolouring of~$v$ to give it the colour
  from~$\beta$, completing the proof.
\end{proof}

\noindent
Theorem~\ref{DF-th1} is best possible, as can be seen, e.g., by the
complete graphs~$K_n$ and trees. (For a tree~$T$ we have $\dege(T)=1$,
while it is trivial to check that a graph with at least one edge is never
2-mixing.) Constructions of graphs that are $k$-mixing for specific values
of~$k$, and not for other values, can be found in Cereceda et
al.~\cite{CHJ1}.

Since the degeneracy $\dege(G)$ of a graph~$H$ is clearly at most the
maximum degree~$\Delta(G)$, Theorem~\ref{DF-th1} immediately means that for
$k\ge\Delta(G)+2$, $G$ is $k$-mixing, as already noted by
Jerrum~\cite{Je1,Je2}.

An interesting result that is related to Theorem~\ref{DF-th1} was proved by
Choo~\& MacGillivray~\cite{CM}. They proved that if $k\ge\dege(G)+3$, then
the configuration graph (formed by all $k$-colourings with edges between
colourings that differ in the colour on one vertex) is Hamiltonian. In
other words, for those~$k$ we can start at any $k$-colouring of~$G$ and
then there is a sequence of single-vertex recolourings so that every other
$k$-colouring appears exactly once, ending with the original starting
colouring.

When $k\ge\dege(G)+2$, the proof of Theorem~\ref{DF-th1} provides an
algorithm to find a sequence of transformations between any two
$k$-colourings of~$G$. But the best upper bound on the number of steps that
can be obtained from the proof is exponential in the number of vertices
of~$G$. No graph is known for which such an exponential number of steps is
necessary. In fact the following is conjectured in Cereceda~\cite{Ce}.

\pagebreak
\begin{conj}[\normalfont Cereceda~\cite{Ce}]\label{Ce-con1}\quad\\*
  For a graph~$G$ on~$n$ vertices and integer $k\ge\dege(G)+2$, any two
  $k$-colourings of~$G$ can be transformed from one into the other using
  $O(n^2)$ single-vertex recolouring steps.
\end{conj}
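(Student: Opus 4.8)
The plan is to eliminate the exponential blow-up hidden in the induction of Theorem~\ref{DF-th1} by routing every transformation through a single \emph{canonical} colouring. Fix a degeneracy ordering $v_1,\ldots,v_n$ of~$G$, so that each $v_i$ has at most $d:=\dege(G)$ neighbours among $v_1,\ldots,v_{i-1}$, and let~$\gamma$ be the greedy $k$-colouring that assigns to each~$v_i$ the smallest colour not already used on its earlier neighbours. Because every~$v_i$ has at most $d\le k-2$ earlier neighbours, $\gamma$ is well defined and uses at most $d+1$ colours. It then suffices to transform an arbitrary $k$-colouring into~$\gamma$ in $O(n^2)$ steps, since a sequence $\alpha\to\gamma$ followed by the reverse of a sequence $\beta\to\gamma$ connects any~$\alpha$ and~$\beta$ within the same bound (recolouring being symmetric).

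To reach~$\gamma$, I would process $v_1,\ldots,v_n$ in order while maintaining the invariant that a vertex is never recoloured after it has been set to its canonical colour. When~$v_i$ is processed, its earlier neighbours already carry their final colours, all different from~$\gamma(v_i)$, so the only obstructions are later neighbours currently coloured~$\gamma(v_i)$; the subroutine is to \emph{clear} each such neighbour onto another colour and then recolour~$v_i$ to~$\gamma(v_i)$. A counting argument controls how often clearing happens: a vertex~$v_l$ can only be cleared while fixing one of its earlier neighbours, of which there are at most~$d$, so~$v_l$ is cleared $O(d)$ times in total. If each individual clearing could be realised in a bounded number of recolourings, the whole procedure would cost $O(dn)$ steps, comfortably inside the target $O(n^2)$; a suitable potential, for instance weighting each vertex by its position in the degeneracy order, would formalise this accounting.

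The hard part -- and the reason the statement is still only a conjecture at the threshold $k=\dege(G)+2$ -- is exactly the clearing subroutine. A later neighbour~$v_l$ that must be moved off colour~$\gamma(v_i)$ may have degree far exceeding~$k$, and in a proper colouring its neighbourhood can already display all~$k$ colours, leaving~$v_l$ momentarily \emph{frozen}: with only $\dege(G)+2$ colours there is no slack to recolour it in a single step. Unfreezing~$v_l$ forces one to recolour \emph{its} neighbours first, and bounding the resulting cascade -- keeping it from growing super-quadratically or cycling -- is the genuine obstacle. When the number of colours is large enough to rule out freezing, say $k\ge\Delta(G)+2$ (the bound of Jerrum~\cite{Je1,Je2}), every vertex always has a free colour, each clearing is a single cascade-free recolouring, and the counting above even yields a linear bound; Conjecture~\ref{Ce-con1} asserts that one pays at most a quadratic price for weakening this slack all the way down to $k\ge\dege(G)+2$. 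I would therefore expect a proof to come either from a global potential that amortises the cascades directly, or from first settling structured classes -- such as chordal or bounded-treewidth graphs -- in which the cascades provably terminate.
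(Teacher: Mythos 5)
You have not proved the statement, and no one has: Conjecture~\ref{Ce-con1} is exactly that, a conjecture, and the paper contains no proof of it to compare against. The paper only records that weaker versions hold --- Theorem~\ref{Ce-th1} gives the $O(n^2)$ bound under the stronger hypotheses $k\ge2\dege(G)+1$ or $k\ge\Delta(G)+2$ --- and that the proof of Theorem~\ref{DF-th1} at the threshold $k=\dege(G)+2$ yields only an exponential bound on the number of steps. Your proposal is commendably honest in flagging this, but what you present is a strategy with its central step missing, not an argument: everything rests on the ``clearing'' subroutine, and bounding that subroutine \emph{is} the conjecture.

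To be concrete about why the gap is fatal rather than technical: when you process $v_i$ and must clear a later neighbour $v_l$ off the colour $\gamma(v_i)$, with only $\dege(G)+2$ colours available $v_l$ may be frozen, and unfreezing it may force you to recolour vertices in $v_l$'s neighbourhood --- including \emph{earlier, already-processed} vertices. At that point your invariant (``a vertex is never recoloured after being set to its canonical colour'') is destroyed, and with it the accounting that each vertex is cleared $O(d)$ times; nothing then bounds the depth or breadth of the cascade, which is precisely how the exponential blow-up in the induction of Theorem~\ref{DF-th1} arises. Your side remarks are broadly consistent with what is known: for $k\ge\Delta(G)+2$ every vertex always has a spare colour, clearing is a single step, and one gets an $O(\Delta(G)\,n)$ bound (quadratic in general, linear only for bounded degree --- your claim of a linear bound is too strong), in line with Theorem~\ref{Ce-th1}. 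So the proposal correctly locates the obstacle and correctly explains why the case $k\ge\Delta(G)+2$ is easy, but it does not advance the conjecture itself.
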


\noindent
If true, the value $O(n^2)$ in Conjecture~\ref{Ce-con1} would be best
possible. Some weaker versions of the conjecture were proved in~\cite{Ce}.

\begin{theorem}[\normalfont Cereceda~\cite{Ce}]\label{Ce-th1}\quad\\*
  Conjecture~\ref{Ce-con1} is true under the stronger assumptions
  $k\ge2\dege(G)+1$ or $k\ge\Delta(G)+2$.
\end{theorem}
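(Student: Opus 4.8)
The plan is to prove the two sufficient conditions separately, in each case building an \emph{explicit} sequence of single-vertex recolourings transforming~$\alpha$ into~$\beta$ and bounding its length. The common engine is a ``fix-and-freeze'' idea: process the vertices in a suitable order, recolour the current vertex to its target colour~$\beta(v)$, and never touch it again; the analysis then reduces to counting, for each vertex, how often it must be moved out of the way before it is frozen.

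For the case $k\ge\Delta(G)+2$ this works directly. Order the vertices arbitrarily and process $v_1,\dots,v_n$, maintaining that $v_1,\dots,v_{i-1}$ already carry their final colours $\beta(v_1),\dots,\beta(v_{i-1})$. When I come to $v_i$, the only obstruction to setting it to~$\beta(v_i)$ is an as-yet-unfrozen neighbour~$u$ currently coloured~$\beta(v_i)$. Since $\deg(u)\le\Delta(G)$ and $k\ge\Delta(G)+2$, vertex~$u$ has at least two free colours, so I can move it to one that is different from~$\beta(v_i)$, without disturbing any frozen vertex. After clearing all such neighbours I recolour~$v_i$ and freeze it. Each edge triggers at most one ``move aside'' (charged to the endpoint processed first) and each vertex is fixed exactly once, so the total number of recolourings is at most $|E|+n=O(n^2)$.

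For the case $k\ge2\,\dege(G)+1$ the plain fix-and-freeze breaks down, because relocating a blocking neighbour needs as many spare colours as its degree, which may be as large as~$\Delta(G)$. Instead I would induct on~$n$ along a degeneracy ordering, reusing the deletion scheme in the proof of Theorem~\ref{DF-th1}: delete a vertex~$v$ with $\deg(v)\le\dege(G)$, mix $G'=G-v$ recursively, and reinsert~$v$. The naive interleaving of Theorem~\ref{DF-th1} recolours~$v$ once for \emph{every} move made inside~$G'$, which only yields $f(n)\le 2f(n-1)+O(1)$ and hence an exponential bound. The point is that the roughly $\dege(G)$ colours available beyond the $\dege(G)+2$ of Theorem~\ref{DF-th1} are exactly what is needed to \emph{pin}~$v$: I would move the at most $\dege(G)$ neighbours of~$v$ off the intended colour of~$v$, fix~$v$ there, and then run the recursive mixing of~$G'$ under the extra constraint that no neighbour of~$v$ ever uses that colour. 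Then~$v$ is recoloured only $O(1)$ times in total, the recursion becomes additive, $f(n)\le f(n-1)+O(n)$, and $f(n)=O(n^2)$ follows.

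The hard part, and the step I expect to demand the most care, is making this additive recursion rigorous. Pinning~$v$ converts the recursive call into a \emph{list}-mixing problem on~$G'$ in which every neighbour of~$v$ has lost one colour from its palette, so the natural inductive statement is the list version of Theorem~\ref{DF-th1} together with an $O(n^2)$ step count. One must then verify that the reduced lists still clear the inductive threshold and, more delicately, that the list sizes do not erode below it as deletions accumulate down the recursion. Controlling this bookkeeping---equivalently, guaranteeing that throughout the whole process each vertex is recoloured only $O(n)$ times---is precisely where the strengthened hypothesis $k\ge2\,\dege(G)+1$ is spent, and it is the technical core that separates this result from the exponential bound implicit in Theorem~\ref{DF-th1}.
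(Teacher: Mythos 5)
The survey itself contains no proof of Theorem~\ref{Ce-th1}; it is quoted from Cereceda's thesis~\cite{Ce}, so your proposal has to be judged on its own merits. The half dealing with $k\ge\Delta(G)+2$ is correct and complete: a blocking neighbour~$u$ sees at most $\Delta(G)$ colours on its own neighbourhood, so at least two colours are free for~$u$, and at least one of them differs from its current colour~$\beta(v_i)$; recolouring~$u$ to such a colour disturbs no frozen vertex, and your charging argument correctly gives at most $|E|+n=O(n^2)$ steps.

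The half dealing with $k\ge2\dege(G)+1$ has a genuine gap, located exactly at the step you acknowledge and then defer. Write $d=\dege(G)$. In your pinning scheme, the colour at which a vertex is pinned must avoid both the pinned colours of its already-pinned neighbours (a pinned vertex never moves again) and the current \emph{and} target colours of its not-yet-pinned neighbours (otherwise either the pin cannot be executed or the target colouring itself violates the list constraint handed to the recursion). When $k=2d+1$ exactly, the second family of constraints already exhausts the entire budget --- $2d$ forbidden colours out of $2d+1$ --- so there is no slack at all to absorb even a single constraint of the first kind. This is fatal already for trees: take the path $p_1p_2p_3p_4$ with $k=3$, $d=1$, $\alpha=(2,1,2,1)$, $\beta=(1,2,1,2)$. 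Every degeneracy order must pin an endpoint first, say~$p_1$, whose pinned colour is forced to be~$3$ (it must avoid $\alpha(p_2)=1$ and $\beta(p_2)=2$); the next pinnable vertex then has to avoid colour~$3$ (its pinned neighbour) as well as colours $1$ and~$2$ (current and target of its unpinned neighbour), i.e.\ every colour, and all other orders fail the same way. The variant in which you pin~$v$ at~$\beta(v)$ after shoving blocking neighbours aside also dies on this example, because shoving a neighbour aside requires a spare colour in that neighbour's full neighbourhood, which at $k=2d+1$ need not exist. So this is not bookkeeping that a more careful induction can control; the mechanism itself is wrong at the stated threshold.

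The repair requires abandoning the demand that the removed vertex be recoloured $O(1)$ times. One mechanism that does work: remove a vertex~$v$ of degree at most~$d$, obtain by induction an ordinary, list-free sequence~$\sigma$ transforming $\alpha$ into~$\beta$ on $G-v$, and replay~$\sigma$ in~$G$, moving~$v$ aside each time a neighbour needs $v$'s current colour. Since $k\ge2d+1$, at such a moment~$v$ has at least $d+1$ usable colours; if~$v$ always jumps to the usable colour whose next conflict in the remainder of~$\sigma$ lies farthest in the future, then between two consecutive forced moves of~$v$ all $d+1$ of those colours must be hit, and each recolouring step of~$\sigma$ on $N(v)$ hits at most one. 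Hence~$v$ moves at most once per $d+1$ recolourings of its neighbourhood, and strengthening the induction hypothesis to ``each vertex is recoloured $O(n)$ times'' closes the recursion additively, $f(n)\le f(n-1)+O(n)$, giving $O(n^2)$. Your closing remark correctly identifies the per-vertex $O(n)$ bound as the crux; the trouble is that pinning, which insists on $O(1)$ recolourings of pinned vertices, is precisely what makes that invariant unattainable.
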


\noindent
Theorem~\ref{Ce-th1} means that Conjecture~\ref{Ce-con1} is true if~$G$ is
a tree (then $\dege(G)+2=3=2\dege(G)+1$) or if~$G$ is regular (in which
case $\dege(G)=\Delta(G)$).

Note that Theorem~\ref{DF-th1} has algorithmic consequences if we restrict
the decision problems to classes of graph in which each graph has
degeneracy at most some fixed upper bound. For instance, as planar graphs
have degeneracy at most~5, we obtain that \textsc{7-Colour-Mixing} and
\textsc{7-Colour-Path} restricted to planar graphs are trivially
in~\textbf{P}, as the answer is always ``yes''.

\medskip
We now return to the recolouring complexity problems introduced earlier in
this subsection. The following are some results that are known about those
problems.

\begin{theorem}\label{col-th1}\quad

  \qitem{(a)} If $k=2$, then \textsc{$k$-Colour-Path} and
  \textsc{$k$-Colour-Mixing} are in~{\normalfont\textbf{P}}.

  \qitem{(b)} If $k=3$, then \textsc{$k$-Colour-Path} is
  in~{\normalfont\textbf{P}}; while \textsc{$k$-Colour-Mixing} is
  {\normalfont\textbf{coNP}}-complete (Cereceda et al.~\cite{CHJ2,CHJ3}).

  \qitem{(c)} For all $k\ge4$, \textsc{$k$-Colour-Path} is
  {\normalfont\textbf{PSPACE}}-complete (Bonsma and Cereceda~\cite{BC}).
\end{theorem}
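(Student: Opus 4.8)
The plan is to treat the three parts separately, since they call for very different techniques; and to note at the outset that membership in \textbf{PSPACE} is automatic for all of them by the discussion in Section~\ref{sec-intro} (assumptions A1 and A2 place these problems in \textbf{NPSPACE}, hence in \textbf{PSPACE} by Savitch's Theorem). For part~(a) I would argue directly from the rigidity of $2$-colourings. In any proper $2$-colouring every vertex of positive degree is \emph{frozen}: its only alternative colour is the one forced on all of its neighbours, so recolouring it would create a monochromatic edge. Hence a single-vertex recolouring is available only at isolated vertices. Consequently \textsc{$2$-Colour-Path} has answer yes if and only if $\alpha$ and $\beta$ agree on every vertex of positive degree (the isolated vertices can then be corrected one at a time), and $G$ is $2$-mixing if and only if $G$ has no edges at all, since a single edge already produces two colourings of its component differing on a frozen vertex. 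Both conditions are testable in linear time, so both problems lie in~\textbf{P}.

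For the \textsc{$3$-Colour-Path} part of~(b) the idea is to find a polynomially computable invariant that characterises exactly when two proper $3$-colourings lie in the same component of the configuration graph. Regarding the colours as residues $0,1,2$ modulo~$3$, each proper $3$-colouring assigns to every oriented edge $(u,v)$ a difference $\varphi(v)-\varphi(u)\in\{+1,-1\}$ modulo~$3$; summing the integer representatives $\pm1$ around any oriented cycle yields a multiple of~$3$, and I would call one third of this sum the \emph{winding number} of the colouring on that cycle. The key point is that a valid single-vertex recolouring leaves every winding number unchanged: a vertex~$v$ on the cycle can only be recoloured when its two cycle-neighbours share a colour, and in that case its contribution to the cycle sum is zero both before and after. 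Equality of the winding numbers on a cycle basis is therefore necessary for connectivity and is computable in polynomial time. First I would reduce to $2$-connected graphs, dealing with cut vertices and vertices of degree at most one separately, and then prove the substantial statement, that agreement of all winding numbers is also \emph{sufficient}. This sufficiency is the main obstacle of the part: it requires building an explicit recolouring sequence by a careful structural induction, and is where the genuine work of the argument lies.

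Given that \textsc{$3$-Colour-Path} is in~\textbf{P}, membership of \textsc{$3$-Colour-Mixing} in~\textbf{coNP} is immediate: a ``no''-instance is certified by a single pair of proper $3$-colourings that are not connected, and by the polynomial algorithm of the previous paragraph one can verify in polynomial time both that the pair are proper colourings and that no reconfiguration between them exists. For \textbf{coNP}-hardness I would reduce from a suitable \textbf{coNP}-complete problem, encoding its instances as graphs by means of gadgets whose colourings represent truth assignments, and engineering the construction so that a \emph{frozen} colouring (equivalently, a pair of colourings separated by the winding-number invariant above) exists precisely when the source instance is a ``yes''-instance; deciding whether the resulting graph is $3$-mixing then solves the \textbf{coNP}-complete question.

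Finally, for part~(c) only \textbf{PSPACE}-hardness needs proof. The plan is to reduce from a reconfiguration problem already known to be \textbf{PSPACE}-complete, for instance \textsc{st-Conn($\mathcal{S}$)} for a non-tight set~$\mathcal{S}$ (Theorem~\ref{GK-th1}\,(a)). From such an instance I would construct a graph together with two $4$-colourings in which the legal single-vertex recolourings correspond step-for-step to the single-variable flips of the satisfiability reconfiguration, so that a colour path from~$\alpha$ to~$\beta$ exists if and only if a reconfiguration path exists in the source. The technical heart --- and the main obstacle --- is to design colour gadgets rigid enough that the only available moves are exactly those simulating the source dynamics: one needs ``locking'' gadgets that freeze the auxiliary vertices except at the intended moments. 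The case $k=4$ is the tight one; to lift the construction to every $k>4$ I would attach cliques that forbid the surplus colours on the vertices carrying information, so that the extra colours can never interfere.
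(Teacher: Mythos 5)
Your treatment of part~(a) matches the paper's own remark (that with two colours the end-vertices of any edge can never be recoloured); the only quibble is that a non-bipartite graph has no $2$-colourings at all and is therefore vacuously $2$-mixing, so the exact criterion is ``edgeless or not $2$-colourable'' --- this does not affect membership in \textbf{P}. The genuine gap is in your algorithm for \textsc{3-Colour-Path}. The winding-number (cycle-weight) invariant is indeed the central tool of the cited papers~\cite{CHJ2,CHJ3}, and your argument that it is preserved by single-vertex recolourings is correct. But your claimed converse --- that for $2$-connected graphs equality of all winding numbers suffices for two colourings to be connected --- is false. Take $G=C_6$ with colourings $\alpha=(1,2,3,1,2,3)$ and $\beta=(2,3,1,2,3,1)$ around the cycle. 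Both are proper, both are \emph{frozen} (every vertex sees the two other colours on its neighbours, so no single-vertex recolouring is possible at all), and they are distinct, so they lie in different components of the configuration graph; yet both assign weight $+1$ to every oriented edge, hence both have winding number $+2$. An algorithm comparing winding numbers on a cycle basis would wrongly answer ``yes'' here. The characterisation actually proved in~\cite{CHJ3} requires more: besides cycle weights one must compare the colourings on the fixed (frozen) vertices and compare weights along paths joining fixed vertices, and identifying the fixed vertices is itself a nontrivial part of their algorithm. As written, the heart of your part~(b) rests on a false statement, not merely an unproved one.

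Two further comments on the parts you leave as intent. For the \textbf{coNP}-hardness of \textsc{3-Colour-Mixing}, the route in~\cite{CHJ2} (emphasised by the survey) is structural: a connected $3$-colourable graph fails to be $3$-mixing if and only if it folds to $C_3$ or $C_6$ (Theorem~\ref{th-fold}), and hardness is derived from this foldability criterion. Your plan to make a frozen colouring exist exactly for ``yes''-instances is not obviously adequate, since non-$3$-mixing graphs need not admit frozen colourings at all (attach a pendant vertex to $C_6$: the resulting graph still folds to $C_6$, hence is not $3$-mixing, but the pendant vertex is never frozen). For part~(c), Bonsma and Cereceda~\cite{BC} reduce from the token-sliding/constraint-logic problems of Hearn and Demaine (Theorems~\ref{HD-th1} and~\ref{HD-th2}) rather than from \textsc{st-Conn($\mathcal{S}$)}; your choice of source problem is legitimate in principle, but all of the difficulty lies in the locking gadgets you only postulate, and note that for ordinary (non-list) colourings ``attaching cliques to forbid surplus colours'' is delicate, because attached cliques can themselves be recoloured; rigidity has to be engineered via genuinely frozen gadgets, which is exactly the technical heart of~\cite{BC}.
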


\begin{theorem}\label{col-th2}\quad

  \qitem{(a)} If $k=2$, then \textsc{$k$-List-Colour-Path} and
  \textsc{$k$-List-Colour-Mixing} are in~{\normalfont\textbf{P}}.

  \qitem{(b)} For all $k\ge3$, \textsc{$k$-List-Colour-Path} is
  {\normalfont\textbf{PSPACE}}-complete (Bonsma and Cereceda~\cite{BC}).
\end{theorem}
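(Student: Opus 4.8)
The plan is to treat the two parts separately, handling the polynomial case~(a) by reducing to the Boolean reconfiguration results of Theorem~\ref{GK-th1}, and the hardness case~(b) by isolating the genuinely new work in the single case $k=3$.

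For part~(a) I would encode $2$-list-colourings as satisfying assignments of a $2$-CNF-formula. Introduce one Boolean variable $x_v$ for each vertex $v$ with $|L(v)|=2$, say $L(v)=\{a_v,b_v\}$, with the convention that $x_v$ is~$T$ exactly when $\varphi(v)=a_v$; a vertex with a singleton list is substituted as a constant. Each edge $uv$ then imposes a $2$-CNF constraint on $x_u,x_v$: if $L(u)\cap L(v)=\emptyset$ there is no constraint; if the lists share exactly one colour~$c$ the constraint is that $u,v$ are not both coloured~$c$ (a single clause); and if $L(u)=L(v)$ the constraint is an exclusive-or (two clauses). Under this encoding a single-vertex recolouring is exactly a single-variable flip, so the configuration graph of the $2$-list-colourings is precisely the subgraph of the hypercube induced by the satisfying assignments of the resulting $2$-CNF-formula. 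Since every $2$-CNF relation is bijunctive, hence Schaefer and therefore tight, Theorem~\ref{GK-th1}\,(a) places \textsc{$2$-List-Colour-Path} in~\textbf{P}, while Theorem~\ref{GK-th1}\,(c) places \textsc{$2$-List-Colour-Mixing} in~\textbf{P} (after a preliminary $2$-SAT satisfiability check to dispose of the degenerate case in which there are no colourings at all).

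For part~(b), membership in \textbf{PSPACE} is automatic: testing that a candidate is a proper $L$-colouring and that two colourings differ in a single vertex can both be done in polynomial time, so assumptions~A1 and~A2 hold. For $k\ge4$ I would observe that ordinary $k$-colouring is literally the special case of list-colouring in which every list equals $\{1,\ldots,k\}$, so \textsc{$k$-Colour-Path} reduces trivially to \textsc{$k$-List-Colour-Path}; the \textbf{PSPACE}-completeness then follows from Theorem~\ref{col-th1}\,(c). The real content is the base case $k=3$, where this trick is unavailable because \textsc{$3$-Colour-Path} is only in~\textbf{P} (Theorem~\ref{col-th1}\,(b)), so the extra power must come from the lists themselves.

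To prove \textsc{$3$-List-Colour-Path} \textbf{PSPACE}-hard I would reduce from a reconfiguration problem already known to be \textbf{PSPACE}-complete, for instance \textsc{st-Conn($\mathcal{S}$)} for a non-tight~$\mathcal{S}$ (Theorem~\ref{GK-th1}\,(a)). Each Boolean variable becomes a vertex with list $\{0,1\}$, so that flipping its colour mirrors flipping the variable; each clause becomes a gadget built from vertices with three-element lists and joined to the relevant variable-vertices, designed so that the gadget admits a proper list-colouring precisely when the clause is satisfied, with the third colour providing exactly the internal slack the gadget needs. The two input assignments translate into two list-colourings $\alpha,\beta$, and one argues that $\alpha$ reconfigures to $\beta$ if and only if the two assignments are connected. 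I expect the construction of these clause gadgets to be the main obstacle: they must simultaneously (i)~enforce each clause at \emph{every} intermediate colouring, not merely at the endpoints, (ii)~permit all variable flips that preserve satisfiability, and (iii)~introduce no spurious recolouring sequences that would create shortcuts absent from the source instance. Establishing this faithful, bidirectional correspondence --- especially ruling out ``cheating'' moves inside the gadgets --- is where the delicate work lies; the details are carried out by Bonsma and Cereceda~\cite{BC}.
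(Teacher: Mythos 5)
Your proposal is correct and follows the same overall strategy as the paper: part~(a) by encoding 2-list-colourings as satisfying assignments of a 2-CNF-formula and invoking Theorem~\ref{GK-th1}, part~(b) by deferring the hardness to Bonsma and Cereceda~\cite{BC}. Two differences are worth recording. For~(a), your encoding (one Boolean variable per vertex with a 2-element list, with implication/XOR clauses on edges) gives a literal isomorphism between the list-colouring configuration graph and the solution graph of the formula; this is slightly cleaner than the paper's encoding, which introduces one variable $x_{v,c}$ per vertex--colour pair with clauses $x_{v,c}\vee x_{v,d}$ and $\neg x_{u,c}\vee\neg x_{v,c}$, so that a single recolouring corresponds to \emph{two} flips and assignments with both variables true must be argued away. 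For~(b), the paper's proof is simply the remark that the statement follows from the \emph{proof} (not the statement) of Theorem~\ref{col-th1}\,(c) in~\cite{BC}. Your reduction for $k\ge4$ (set every list equal to $\{1,\ldots,k\}$) is valid, but note that in~\cite{BC} the logical order is the reverse of yours: list hardness with lists of size~3 is established first, by a reduction from Hearn and Demaine's sliding-token/NCL machinery, and \textsc{$k$-Colour-Path} hardness for $k\ge4$ is then derived from it by simulating lists with ordinary colours. Consequently your attribution of the sketched \textsc{st-Conn($\mathcal{S}$)}-based gadget construction to~\cite{BC} is inaccurate --- no such reduction appears there --- though this is harmless for correctness, since the result you ultimately need ($k=3$ list hardness) is indeed proved in~\cite{BC}, just by a different reduction.
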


\noindent
We already noted that the claims in Theorem~\ref{col-th1}\,(a) are trivial:
if we have only two colours, then the end-vertices of any edge can never be
recoloured.

The results in Theorem~\ref{col-th1}\,(b) are clearly the odd ones among
the list of results above. The proof in~\cite{CHJ2} of the
\textbf{coNP}-completeness of \textsc{3-Colour-Mixing} uses the concept of
\emph{folding}: given two non-adjacent vertices~$u$ and~$v$ that have a
common neighbour, a \emph{fold on~$u$ and~$v$} is the identification of~$u$
and~$v$ (together with removal of any double edges produced). We say a
graph~$G$ is \emph{foldable} to~$H$ if there exists a sequence of folds
that transforms~$G$ to~$H$. Folding of graphs, and its relation to vertex
colouring, has been studied before, see for instance~\cite{CE}.

Combining observations from~\cite{CHJ1} for non-bipartite 3-colourable
graphs, with the structural characterisation of bipartite graphs that are
not 3-mixing in~\cite{CHJ2}, gives the following.

\begin{theorem}[\normalfont Cereceda et
  al.~\cite{CHJ1,CHJ2}]\label{th-fold}\quad\\*
  Let~$G$ be a connected 3-colourable graph. Then~$G$ is not 3-mixing if
  and only if~$G$ is foldable to the 3-cycle~$C_3$ or the 6-cycle~$C_6$.  
\end{theorem}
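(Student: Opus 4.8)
I would split the biconditional into its two implications and base everything on a single invariant. Identify the colours with $\{0,1,2\}$, orient each edge of $G$ arbitrarily, and for a proper $3$-colouring $c$ assign to an edge oriented from $x$ to $y$ the sign $+1$ if $c(y)-c(x)\equiv 1$ and $-1$ if $c(y)-c(x)\equiv 2$, all differences taken modulo~$3$. For any closed walk $W$ the signed sum $S_c(W)$ of these values is a multiple of~$3$ (it telescopes to $0$ modulo~$3$), and---this is the one computation I would actually carry out---recolouring a single vertex leaves $S_c(W)$ unchanged, because for a vertex $w$ on $W$ the combined contribution of its two incident edges depends only on the colours of the two neighbours of $w$ along $W$, not on $c(w)$. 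Hence the winding number $S_c(W)$ is constant on each component of the configuration graph, so two colourings with different winding on some closed walk lie in different components.

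For the \emph{if} direction, a sequence of folds is exactly a surjective graph homomorphism $\phi: G\to H$ with $H\in\{C_3,C_6\}$ (folds never create a loop, since identified vertices are non-adjacent, so every edge of $G$ maps to an edge of $H$). Pulling a proper colouring $d$ of $H$ back along $\phi$ gives a proper colouring $d\circ\phi$ of $G$, and since the sign of an edge equals the sign of its image we have $S_{d\circ\phi}(W)=S_d(\phi(W))$ for every closed walk $W$. I would pull back two colourings of $H$ whose windings around the defining cycle differ: for $C_3$ the colourings $(0,1,2)$ and $(0,2,1)$, for $C_6$ the colourings $(0,1,2,0,1,2)$ and $(0,1,0,1,0,1)$. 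When $H=C_3$ the graph $G$ is non-bipartite, and any odd cycle $O$ of $G$ maps to a closed walk of odd length in $C_3$, whose winding is a nonzero (odd) integer; the two pullbacks therefore differ on $O$, so $G$ is not $3$-mixing. When $H=C_6$ the same conclusion follows once one exhibits a closed walk of $G$ on which $d\circ\phi$ has nonzero winding.

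For the \emph{only if} direction I would prove the contrapositive: if $G$ is not foldable to $C_3$ or $C_6$, then $G$ is $3$-mixing. Fold $G$ as long as possible to reach a fold-irreducible graph $G^{*}$, which by hypothesis is neither $C_3$ nor $C_6$. Two ingredients then finish the argument. First, folding reflects mixing: if the fold $H$ of $G$ is $3$-mixing, so is $G$. Here the colourings of $H$ correspond bijectively to the \emph{balanced} colourings of $G$ (those giving the identified pair $u,v$ a common colour); a recolouring of $H$ lifts to one or two properness-preserving recolourings of $G$, so any two balanced colourings of $G$ are connected, and it remains only to connect an arbitrary colouring of $G$ to a balanced one. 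Second, the structural classification of fold-irreducible graphs: a connected $3$-colourable fold-irreducible graph other than $C_3$ (in the non-bipartite case) or $C_6$ (in the bipartite case) is $3$-mixing. This is precisely where I would invoke the analysis of non-bipartite $3$-colourable graphs in~\cite{CHJ1} and the characterisation of non-$3$-mixing bipartite graphs in~\cite{CHJ2}.

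The main obstacle is this structural classification, that is, essentially the whole \emph{only if} direction: showing that a fold-irreducible graph which is not $C_3$ or $C_6$ always has enough local freedom to reconfigure every colouring is the technical heart of~\cite{CHJ1,CHJ2}. A secondary difficulty, easy to overlook, is the step in the folding lemma that connects an arbitrary colouring to a balanced one: a frozen colouring of $G$ with $c(u)\neq c(v)$ cannot be moved at all, so one must argue that in a graph reducible to a $3$-mixing fold no such obstruction can survive. Finally, the bipartite $C_6$ case of the \emph{if} direction needs the small homological fact that a fold does not destroy the relevant cycle, so that $\phi$ genuinely carries some cycle of $G$ to a nonzero winding in $C_6$.
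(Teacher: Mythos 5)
Your winding-number invariant is sound, and it is in fact the central tool of~\cite{CHJ1,CHJ2}: the per-vertex computation checks out (the two walk-edges at a vertex contribute $0$ or $\pm2$ depending only on the colours of its two walk-neighbours), and with it your \emph{if} direction for $C_3$ is essentially complete. Two repairs are needed there, though. First, a sequence of folds yields a surjective homomorphism but not conversely ($C_6$ wraps onto $C_3$, yet $C_6$, being bipartite, is not foldable to $C_3$); so your claim that foldability to $C_3$ forces $G$ to be non-bipartite must come from the observation that folds preserve bipartiteness (the two identified vertices lie in the same part), not from surjectivity of $\phi$. Second, the ``small homological fact'' you defer in the $C_6$ case is true and can be filled in: lift a winding-$6$ closed walk of $C_6$ back through the folds one at a time, choosing for each edge a preimage edge, and whenever consecutive lifted edges land on the two different identified vertices $u$ and $v$, insert the two-edge detour through a common neighbour $w$ of $u$ and $v$; such a detour contributes $0$ to the winding, so $G$ inherits a closed walk of nonzero winding.

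The \emph{only if} direction, however, is broken, not merely incomplete. Your ingredient (1), ``if some fold $H$ of $G$ is 3-mixing then so is $G$'', is false: $C_6$ folds onto a $4$-cycle with a pendant vertex, which is 3-mixing (iterating folds takes $C_6$ all the way down to $K_2$), yet $C_6$ is not 3-mixing. The obstruction is exactly what you call a ``secondary difficulty'': the frozen colouring $(0,1,2,0,1,2)$ of $C_6$ gives the folded pair different colours and can never reach a balanced colouring, so no argument can remove this obstruction --- it is fatal to the lemma as stated. The structural problem underneath is that your reduction never actually uses the hypothesis that $G$ is not foldable to $C_3$ or $C_6$: a connected fold-irreducible graph is necessarily complete (a connected non-complete graph has two vertices at distance exactly two, which form a foldable pair), so every connected bipartite graph terminates at $K_1$ or $K_2$, and your scheme --- terminal graph 3-mixing, plus ingredient (1) --- would therefore ``prove'' that every connected bipartite graph, including $C_6$ itself, is 3-mixing. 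To rescue the induction one would have to invoke ``not foldable to $C_6$'' at every fold step in order to connect an arbitrary colouring to a balanced one, and that is precisely the hard content of~\cite{CHJ2}; appealing, as your ingredient (2) does, to ``the characterisation of non-3-mixing bipartite graphs in~\cite{CHJ2}'' is circular, because that characterisation \emph{is} the bipartite case of the theorem being proved. For comparison, the survey's own route is: for non-bipartite 3-colourable connected $G$ both sides of the equivalence always hold ($G$ is never 3-mixing, by your invariant applied to an odd cycle and a colour transposition, and $G$ always folds to $C_3$), so the theorem reduces to the bipartite case, which is exactly the theorem of~\cite{CHJ2}.
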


\noindent
It is easy to see that every non-bipartite 3-colourable connected graph can
be folded to~$C_3$, so the interesting part of this theorem is the
characterisation of bipartite non-3-mixing graphs as being foldable
to~$C_6$. In a sense, the theorem shows that~$C_3$ and~$C_6$ are the
`minimal' graphs that are not 3-mixing.

\medskip
Theorem~\ref{col-th2}\,(b) is not explicitly given in Bonsma and
Cereceda~\cite{BC}, but follows from the proof of
Theorem~\ref{col-th1}\,(c) in that paper.

The results that \textsc{2-List-Colour-Path} and
\textsc{2-List-Colour-Mixing} can be done in polynomial time can be proved
directly with some effort. But it is easy to see that in fact any
2-list-colouring problem can be reduced to a Boolean 2-CNF-formula. For
each vertex~$v$ and colour $c\in L(v)$, introduce a Boolean
variable~$x_{v,c}$. Then for each vertex~$v$ with $L(v)=\{c,d\}$ we add a
clause $x_{v,c}\vee x_{v,d}$; while for each edge~$uv$ and colour
$c\in L(u)\cap L(v)$ we add a clause $(\neg x_{u,c})\vee(\neg x_{v,c})$. We
can now use the results that checking the connectivity of the solution
space of a 2-CNF-formula is in~\textbf{P}, see Theorem~\ref{GK-th1}\,(c).

\medskip
As well as the computational complexity of some of the recolouring
problems, we also know something about the number of recolourings we might
need.

\begin{theorem}\label{col-th3}\quad

  \qitem{(a)} For a graph~$G$ on~$n$ vertices, if two 3-colourings of~$G$
  can be connected by a sequence of single-vertex recolourings, then this
  can be done in $O(n^2)$ steps (Cereceda et al.~\cite{CHJ3}).

  \qitem{(b)} For any $k\ge3$ there exists an exponential function~$f(n)$
  such that for every~$n_0$, there exists a graph~$G$ on $n\ge n_0$
  vertices, an assignment~$L$ of lists of size~$k$ to the vertices of~$G$,
  and two $L$-colourings~$\alpha$ and~$\beta$ of~$G$, such that we can
  transform~$\alpha$ into~$\beta$ by recolouring one vertex at a time, but
  where the number of recolourings required is at least~$f(n)$ (Bonsma and
  Cereceda~\cite{BC}).

  \qitem{(c)} For any $k\ge4$, the result in~(b) also holds for ordinary
  $k$-colouring recolourings.
\end{theorem}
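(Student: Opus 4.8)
The plan is to handle the polynomial upper bound in part~(a) and the exponential lower bounds in parts~(b) and~(c) by entirely different techniques: a potential/height-function argument for the former, and explicit counter gadgets for the latter.

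For part~(a) I would exploit the rigidity of $3$-colourings. Identify the three colours with $\{0,1,2\}$ under addition modulo~$3$, fix a spanning tree of the connected graph~$G$ with root~$r$, and lift any $3$-colouring~$\gamma$ to an integer height function $h_\gamma:V\to\mathbf{Z}$ by putting $h_\gamma(r)=0$ and accumulating, along each tree edge~$uv$, the signed difference $\gamma(v)-\gamma(u)\in\{+1,-1\}\pmod3$. Then $h_\gamma \bmod 3$ recovers~$\gamma$, and two colourings joined by single-vertex recolourings induce the same winding on every cycle, so their height functions may be compared vertex by vertex. Since tree paths have length at most~$n$ and each tree edge contributes $\pm1$, every height lies in $[-n,n]$, so the potential $\Phi(\gamma)=\sum_{v}|h_\gamma(v)-h_\beta(v)|$ is initially $O(n^2)$. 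A single recolouring alters exactly one height value by a bounded constant, and I would show that every useful move decreases~$\Phi$ by a positive constant; iterating then gives a reconfiguration sequence of length $O(n^2)$. The crux, and the main obstacle, is a structural lemma guaranteeing that while $\gamma\ne\beta$ there is always a \emph{legally recolourable} vertex whose height moves strictly towards its target: choosing a vertex of maximum height among those that are too high need not be a local extremum, so one must instead analyse the boundary of the region where $\gamma$ exceeds~$\beta$ and use the target values there to certify that some boundary vertex can be pushed down.

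For part~(b) I would build, for each fixed $k\ge3$, a polynomial-size \emph{binary counter}. The idea is to design a constant-size bit gadget carrying $k$-element lists that admits exactly two stable $L$-colourings, encoding a bit value~$0$ or~$1$, together with a linking gadget between consecutive bits enforcing the usual carry discipline: the $i$-th bit may be toggled only once all lower bits sit in a designated configuration. Chaining $m$ gadgets yields a graph on $n=O(m)$ vertices whose reconfiguration graph, restricted to the relevant region, is forced to traverse all $2^m$ counter values in order. Taking~$\alpha$ to encode the value~$0$ and~$\beta$ the top of the count, the two colourings are connected, yet every reconfiguration sequence must realise each intermediate counter value and so has length at least $2^m=f(n)$ for an exponential~$f$. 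The delicate step here is to verify that single-vertex moves admit no shortcut, which amounts to a careful case analysis of the stable states and of exactly which transitions the lists permit.

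For part~(c) I would transfer the construction of~(b) from list colouring to ordinary $k$-colouring for $k\ge4$. Because one colour is available beyond an effective palette of three, each vertex~$v$ can be attached to a small \emph{blocking} gadget---for instance a suitable clique pinning prescribed colours onto the neighbourhood of~$v$---that removes the unwanted colours and thereby simulates the list~$L(v)$. I would then argue that these blocking gadgets are themselves essentially frozen, so that they create no new recolouring freedom and introduce no shortcuts, whence the exponential lower bound of~(b) survives the translation. That $k=3$ is genuinely excluded is not an artefact of the method but is forced by part~(a), which already bounds ordinary $3$-colouring reconfiguration by $O(n^2)$; the main obstacle in~(c) is precisely to implement the list simulation without the blocking gadgets leaking extra mobility into the counter.
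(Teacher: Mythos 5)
First, a framing remark: the survey itself contains no proof of this theorem --- all three parts are stated with citations (part~(a) to Cereceda et al.~\cite{CHJ3}, parts~(b) and~(c) to Bonsma and Cereceda~\cite{BC}) --- so your proposal can only be judged on its internal completeness and against those papers. For part~(a) there is a genuine gap, and it sits exactly where you placed your ``crux''. Your local claims are fine in the case where every cycle has zero weight (winding): then heights are path-independent, every edge has height-jump $\pm1$, a recolouring changes exactly one height by $\pm2$, and the vertex maximising $h_\gamma-h_\beta$ (ties broken by maximal $h_\gamma$) is a local maximum, hence recolourable, giving the $O(n^2)$ bound. But connected pairs of $3$-colourings can have equal yet \emph{non-zero} cycle weights: a triangle always has weight $\pm3$, and even bipartite graphs are not safe (the $6$-cycle coloured $012012$ has weight $\pm6$). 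In that case the spanning-tree heights are not path-independent, a chord $uv$ can have $h_\gamma(u)-h_\gamma(v)$ far from $\pm1$, and the vertex $v$ of maximal height-excess can be blocked by a chord-neighbour $u$ with $\gamma(u)=\gamma(v)+1$ whose height is \emph{small}, so your tie-breaking rule never detects it; $v$ then has neighbours of both other colours and cannot be recoloured. Proving that this blocking cannot occur for connected pairs --- i.e.\ that a progress move always exists --- is essentially the whole content of the cited proof, which works with cycle-weight invariants and ``fixed'' vertices and obtains $O(n^2)$ by fixing vertices at their target colour one at a time ($O(n)$ recolourings per vertex), not by a single monotone potential. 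Your sentence about ``analysing the boundary of the region where $\gamma$ exceeds $\beta$'' names this difficulty but does not resolve it.

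For parts~(b) and~(c) your plan has the right architecture --- a chained binary counter in the list setting, then simulation of lists by frozen gadgets for ordinary $k$-colourings, which is indeed how the cited paper and the subsequent literature proceed --- but a lower bound of this kind \emph{is} its gadget: without an explicit bit gadget, an explicit carry mechanism, and a verification that single-vertex recolourings admit no shortcut through unstable intermediate states, nothing has been established. There is also a concrete trap you step over in~(c): the $k$-list instances of~(b) may draw their lists from a universe of \emph{more} than $k$ colours, and simulating such lists by frozen cliques yields an ordinary colouring instance whose number of colours equals the size of the universe, not $k$. To get~(c) for every $k\ge4$ one must first arrange the counter of~(b) to use only $k$ colours in total; this is a genuine constraint on the gadget design, not a finishing touch. (Your observation that $k=3$ is necessarily excluded because of part~(a) is correct.)
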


\noindent
The bound $O(n^2)$ in Theorem~\ref{col-th3}\,(a) is best possible.

Two remaining questions are the complexity class of
\textsc{$k$-Colour-Mixing} for $k\ge4$ and \textsc{$k$-List-Colour-Mixing}
for $k\ge3$. In view of the fact that \textsc{$k$-Colour-Path} and
\textsc{$k$-List-Colour-Path} for those values of~$k$ are
\textbf{PSPACE}-complete, one would expect the mixing variants to have a
similar complexity. On the other hand, since the \textbf{coNP}-completeness
of \textsc{3-Colour-Mixing} is obtained by a particular structure that
needs to be present in a graph to fail to be 3-mixing, a similar
graph-structural condition for mixing with more colours might well be
possible.

In view of the results that many of the recolouring problems are not
in~\textbf{P} (assuming that $\text{\textbf{P}}\ne\text{\textbf{NP}}$), it
is interesting to find restricted instances for which the recognition
problems are in~\textbf{P}. A natural choice for a graph class where one
expects this to happen is the class of bipartite graphs, since many
colouring problems are trivial in that class. Surprisingly, restricting the
input of the decision problems to just bipartite graphs does not change any
of the results in Theorems \ref{col-th1}--\ref{col-th3}. A restriction to
planar graphs has more surprising results, as expressed in the final result
of this subsection.

\begin{theorem}\label{col-th4}\quad

  \qitem{(a)} When restricted to planar graphs, \textsc{3-Colour-Mixing}
  becomes polynomial (Cereceda et al.~\cite{CHJ2}).

  \qitem{(b)} When restricted to planar graphs, \textsc{$k$-Colour-Path}
  for $4\le k\le6$ and \textsc{$k$-List-Colour-Path} for $3\le k\le6$
  remain {\normalfont\textbf{PSPACE}}-complete (Bonsma and
  Cereceda~\cite{BC}).\\
  Both decision problems are in~{\normalfont\textbf{P}} for $k\ge7$ when
  restricted to planar graphs.

  \qitem{(c)} When restricted to bipartite planar graphs,
  \textsc{$k$-Colour-Path} for $k=4$ and \textsc{$k$-List-Colour-Path} for
  $3\le k\le4$ remain {\normalfont\textbf{PSPACE}}-complete (Bonsma and
  Cereceda~\cite{BC}).\\
  Both decision problems are in~{\normalfont\textbf{P}} for $k\ge5$ when
  restricted to bipartite planar graphs.
\end{theorem}

\noindent
The results that \textsc{$k$-Colour-Path} and \textsc{$k$-List-Colour-Path}
are polynomial for planar or bipartite planar graphs and larger~$k$,
follows directly from upper bounds for the degeneracy for those graphs and 
Theorem~\ref{DF-th1}.

\subsection{Kempe chain recolouring}\label{subs-Kc}

Given a $k$-colouring~$\varphi$ of a graph~$G$, a \emph{Kempe chain} is a
connected component of the subgraph of~$G$ induced by the vertices coloured
with one of two give colours. In other words, if $c_1,c_2$ are two
different colours, and $W\subseteq V$ is the collection of vertices
coloured either~$c_1$ or~$c_2$, then a Kempe chain is a connected component
of the induced subgraph of~$G$ with vertex set~$W$. By a \emph{Kempe chain
  recolouring} we mean switching the two colours on a Kempe chain. Kempe
chains and Kempe chain recolouring have been essential concepts in the
proofs of many classical results on colouring, such as the Four Colour
Theorem from Appel~\& Haken~\cite{AH1,AH2,AH3} and Vizing's Edge-Colouring
Theorem~\cite{Vi}.

Notice that a Kempe chain recolouring is a generalisation of the
single-vertex recolouring transformation from the previous subsection,
since such a recolouring corresponds to a Kempe chain recolouring on a
Kempe chain consisting of just one vertex. Let us call a graph~$G$
\emph{$k$-Kempe-mixing} if it is possible, for any two $k$-colourings
of~$G$, to transform the first one into the second one by a sequence of
Kempe chain recolourings.

From the observation above, we see that if a graph is $k$-mixing, then it
certainly is $k$-Kempe-mixing. But the reverse need not be true. For
instance, it has been observed many times that a bipartite graph is
$k$-Kempe-mixing for any $k\ge2$ \cite{BH,FS,Mo}, whereas for any $k\ge2$
there exist bipartite graphs that are not $k$-mixing~\cite{CHJ1}.
Furthermore, a simple modification of the proof of Theorem~\ref{DF-th1}
shows that for any graph~$G$, if $k\ge\dege(G)+1$, then~$G$ is
$k$-Kempe-mixing, as was already proved by Las Vergnas~\&
Meyniel~\cite{LM}.

Very little is known about the complexity of determining if a graph is
$k$-Kempe-mixing. The same holds for the `path' version of the problem
(determining if two given $k$-colourings can be transformed into one
another by a sequence of Kempe chain recolourings). Intuitively, there
appear to be at least two reasons why Kempe chain recolouring is so much
harder to analyse than single-vertex recolouring. Firstly, for any
$k$-colouring of a graph it is always possible to perform Kempe chain
recolourings. This is different from single-vertex recolouring, where it
might not be possible to recolour any vertex at all (if all vertices have
all colours different from their own appearing on a neighbour). This kind
of `frozen' colourings is essential in many of the analyses and results on
single-vertex recolourings. Secondly, whereas a single-vertex recolouring
has only a `local' effect, a Kempe chain can affect many vertices
throughout the graph.

The following are some results for planar graphs.

\begin{theorem}\label{Ke-th1}\quad
  
  \qitem{(a)} Every planar graph is 5-Kempe-mixing (Meyniel~\cite{Me}).

  \qitem{(b)} If~$G$ is a 3-colourable planar graph, then~$G$ is
  4-Kempe-mixing (Mohar~\cite{Mo}).
\end{theorem}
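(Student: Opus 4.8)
The two parts share the same flavour, so let me describe a common strategy and then say what changes in each. In both cases it suffices to fix one reference colouring~$\gamma$ and show that every colouring can be transformed into~$\gamma$ by Kempe chain recolourings; since Kempe equivalence is symmetric this makes the configuration graph connected. I would argue by induction on the number of vertices of~$G$. Recall that a single-vertex recolouring is itself a Kempe chain recolouring (on a chain of one vertex), so every move used in the proof of Theorem~\ref{DF-th1} is available here too.

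For part~(a), choose a vertex~$v$ with $\deg(v)\le5$; such a vertex exists because $\dege(G)\le5$ for planar~$G$. Set $G'=G-v$, which is again planar, so by induction~$G'$ is $5$-Kempe-mixing. The easy situation is when the neighbours of~$v$ use at most four colours: then some colour, say~$5$, is free at~$v$, and I would recolour~$v$ to~$5$ and then copy the mixing sequence of~$G'$ up into~$G$. The genuinely planar input is needed in the remaining case, where $\deg(v)=5$ and the five neighbours $v_1,\dots,v_5$ (in the cyclic order of a plane embedding) carry all five colours. Here I would run the classical argument from the proof of the Five Colour Theorem: for the $\{1,3\}$-Kempe chain through~$v_1$, either it misses~$v_3$, so swapping it frees a colour at~$v$; or it reaches~$v_3$, in which case a Jordan-curve argument forces the $\{2,4\}$-chain through~$v_2$ to miss~$v_4$, and swapping that one frees a colour instead. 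Either way a single Kempe recolouring returns us to the easy situation with at most four colours on $N(v)$.

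The step I expect to be the main obstacle is the \emph{lifting}: once~$v$ has a free colour and~$G'$ has been mixed to~$\gamma$ restricted to~$G'$, I must replay those recolourings inside~$G$. A Kempe move of~$G'$ on a colour pair avoiding the colour of~$v$ lifts unchanged, but a move on a pair containing~$v$'s colour can drag~$v$ into the chain, or merge two $G'$-chains through~$v$, so the move actually performed in~$G$ is no longer the intended one. The delicate part is to interleave the $G'$-sequence with auxiliary recolourings of~$v$ that keep~$v$ out of the active chain; this is precisely where $\dege(G)\le5$ (rather than $\le4$) is paid for by the extra swap supplied by planarity, and where the bookkeeping is heaviest.

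For part~(b) the same scheme runs with four colours, but now the degeneracy bound is genuinely insufficient: a $3$-colourable planar graph can have degeneracy~$4$ (for instance the octahedron $K_{2,2,2}$), so the Kempe analogue of Theorem~\ref{DF-th1} does not reach $k=4$ and a purely local degree argument cannot save a colour. Instead I would use the $3$-colouring globally: fix~$\gamma$ to be a proper $3$-colouring with colours $\{1,2,3\}$, then show first that every $4$-colouring can be driven by Kempe recolourings to one that no longer uses colour~$4$, and second that any two $3$-colourings of a $3$-colourable planar graph are Kempe-connected within four colours. Eliminating colour~$4$ is the crux: at a vertex coloured~$4$ one exploits that a $\{4,c\}$-chain together with the plane structure lets the colour-$4$ class be pushed into the three ``real'' colours, as in part~(a) but now using the ambient $3$-colouring rather than a degree bound to guarantee that the chains behave. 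Connecting two $3$-colourings, with colour~$4$ kept as a ``scratch'' colour, should be the more routine half, since the spare colour removes the ``frozen'' obstructions that make single-vertex $3$-recolouring hard.
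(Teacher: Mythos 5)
First, a point of calibration: the survey does not prove this theorem at all --- parts (a) and (b) are quoted from Meyniel~\cite{Me} and Mohar~\cite{Mo} respectively --- so your proposal must be judged as a free-standing proof, and as such it has genuine gaps at exactly the steps you flag as ``delicate''. For part~(a), the inductive scheme of replaying a Kempe sequence of $G'=G-v$ inside~$G$ does work when $k\ge\dege(G)+1$ (this is the Las Vergnas--Meyniel result~\cite{LM} mentioned in Subsection~3.2): before a move on a colour pair $\{a,b\}$, either~$v$ has a free colour outside $\{a,b\}$ and can be parked by a singleton-chain recolouring, or else every colour other than $v$'s appears exactly once on~$N(v)$, so~$v$ is attached to at most one $\{a,b\}$-component of~$G'$ and is dragged along harmlessly. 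But with $k=5$ and $\deg(v)=5$ both escapes can fail simultaneously: $N(v)$ may carry all four colours different from $v$'s, one of them twice, so there is no parking colour \emph{and}~$v$ may join two distinct $\{a,b\}$-components of~$G'$; swapping the $G$-component then wrongly swaps a second $G'$-component. Your remedy --- the Five-Colour-Theorem swap --- does not repair this, because a Kempe swap is global, not local: it recolours vertices of~$G'$ far away from~$v$, so after performing it the restriction of the colouring to~$G'$ no longer agrees with the point of the inductively given sequence you were following, and nothing in the plan re-synchronises (you cannot in general undo the auxiliary swap afterwards, since the intervening $\{a,b\}$-move and the recolouring of~$v$ change the component structure of the auxiliary colour pair). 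Moreover, this parking problem must be solved before \emph{every} problematic move of the lifted sequence, not once at the start as your ``easy case'' suggests. This is not bookkeeping; it is the entire content of Meyniel's theorem, and the proposal does not supply it.

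For part~(b), both halves of your strategy are asserted rather than proved. Swapping a $\{4,c\}$-chain does not ``push the colour-$4$ class into the three real colours'': it simultaneously recolours every $c$-vertex of that chain with colour~$4$, so the number of vertices coloured~$4$ can increase, and you give no progress measure, nor any place where the hypothesis that~$G$ is $3$-colourable (and planar) actually enters the argument. Reducing an arbitrary $4$-colouring to one avoiding colour~$4$ by Kempe moves is essentially the whole difficulty of Mohar's theorem, not a step one can cite the ``plane structure'' for. The second half is also not routine: a transposition of two colours is indeed a product of component swaps, so colourings differing by a colour permutation are Kempe equivalent, but two $3$-colourings of a $3$-chromatic planar graph need not differ by a permutation, and the spare colour~$4$ does not by itself remove the obstructions (note that $3$-colourable graphs exist that are not $4$-mixing in the single-vertex sense, so the ``frozenness'' intuition is not a proof). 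In summary: your outline correctly identifies the known ingredients (induction on a low-degree vertex, the Las Vergnas--Meyniel lifting, the reduce-to-three-colours strategy), but at the two places where planarity must do real work the argument is missing, and in part~(a) the proposed mechanism would actually fail as described.
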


\noindent
The results in the theorem are best possible in the sense that the number
of colours cannot be reduced in either statement~\cite{Mo}.

As we have observed earlier, for the single-vertex recolouring problem, the
smallest graph that is not 3-mixing is the triangle~$C_3$, while the
smallest \emph{bipartite} graph that is not 3-mixing is the 6-cycle~$C_6$.
These graphs are essential in the proof that \textsc{3-Colour-Mixing} is
\textbf{coNP}-complete. The smallest graph that is not 3-Kempe-mixing is
the 3-prism $K_3\mathop{\Box}K_2$ (see Figure~\ref{fig:prism}).
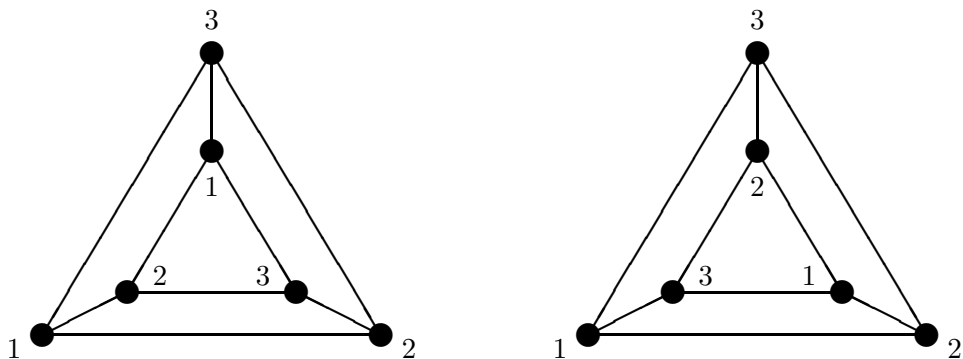
\begin{figure}[ht]
  \medskip
  \centering
  \unitlength1.25mm
  \begin{picture}(44,38)(-4,-4)
    \put(0,0){\circle*{2.5}}
    \put(36,0){\circle*{2.5}}
    \put(18,30){\circle*{2.5}}
    \put(9,4.5){\circle*{2.5}}
    \put(27,4.5){\circle*{2.5}}
    \put(18,19.5){\circle*{2.5}}
    \thicklines
    \put(0,0){\line(1,0){36}}
    \put(18,30){\line(-3,-5){18}}
    \put(18,30){\line(3,-5){18}}
    \put(0,0){\line(2,1){9}}
    \put(36,0){\line(-2,1){9}}
    \put(18,30){\line(0,-1){10.5}}
    \put(9,4.5){\line(1,0){18}}
    \put(18,19.5){\line(-3,-5){9}}
    \put(18,19.5){\line(3,-5){9}}
    \put(-3,-1.5){\makebox(0,0){1}}
    \put(39,-1.5){\makebox(0,0){2}}
    \put(18,33.5){\makebox(0,0){3}}
    \put(12.5,6.25){\makebox(0,0){2}}
    \put(23.5,6.25){\makebox(0,0){3}}
    \put(18,15.75){\makebox(0,0){1}}
  \end{picture}
  \hspace{15mm}
  \begin{picture}(44,38)(-4,-4)
    \put(0,0){\circle*{2.5}}
    \put(36,0){\circle*{2.5}}
    \put(18,30){\circle*{2.5}}
    \put(9,4.5){\circle*{2.5}}
    \put(27,4.5){\circle*{2.5}}
    \put(18,19.5){\circle*{2.5}}
    \thicklines
    \put(0,0){\line(1,0){36}}
    \put(18,30){\line(-3,-5){18}}
    \put(18,30){\line(3,-5){18}}
    \put(0,0){\line(2,1){9}}
    \put(36,0){\line(-2,1){9}}
    \put(18,30){\line(0,-1){10.5}}
    \put(9,4.5){\line(1,0){18}}
    \put(18,19.5){\line(-3,-5){9}}
    \put(18,19.5){\line(3,-5){9}}
    \put(-3,-1.5){\makebox(0,0){1}}
    \put(39,-1.5){\makebox(0,0){2}}
    \put(18,33.5){\makebox(0,0){3}}
    \put(12.5,6.25){\makebox(0,0){3}}
    \put(23.5,6.25){\makebox(0,0){1}}
    \put(18,15.75){\makebox(0,0){2}}
  \end{picture}
  \medskip
  \caption{The 3-prism with two 3-colourings that are not related by Kempe
    chains}
  \label{fig:prism}
\end{figure}

It is easy to check that any Kempe chain recolouring in these two
colourings will only result in renaming the two colours involved in the
Kempe chain, but never changes the structure. It is unknown if the 3-prism
is in some way a `minimal' graph that is not 3-Kempe mixing, or if it is
the only one. Neither is it obvious what subgraph relation we should use
(like `foldable' for the single-vertex recolouring problem) when talking
about `minimal' for Kempe-mixing.

\medskip
Kempe chains have also been used extensively in the analysis of
edge-colourings of graphs. Recall that a \emph{$k$-edge-colouring} of a
graph $G=(V,E)$ is an assignment $\varphi:E\rightarrow\{1,\ldots,k\}$ such
that $\varphi(e)\ne\varphi(f)$ for any two edges~$e,f$ that share a common
end-vertex. Similar to vertex-colourings, a Kempe chain in an edge-coloured
graph is a component of the subgraph formed by the edges coloured with one
of two given colours. Note that in this case every Kempe chain is a path or
an even length cycle, and a recolouring is again just switching the colours
on the chain. Call a graph~$G$ \emph{$k$-Kempe-edge-mixing} if it is
possible for any two $k$-edge-colourings of~$G$ to transform the first one
into the second one by a sequence of Kempe chain recolourings.

Kempe chains on edge-colourings are instrumental in most (if not all)
proofs of Vizing's Theorem~\cite{Vi} that a simple graph with maximum
degree~$\Delta$ has an edge-colouring using at most $\Delta+1$ colours.
Hence it is not surprising that results on $k$-Kempe-edge mixing are
related to this constant as well.

\begin{theorem}[\normalfont Mohar~\cite{Mo}]\label{Mo-th1}\quad

  \qitem{(a)} If a simple graph~$G$ can be edge-coloured with~$k$ colours,
  then~$G$ is $(k+2)$-Kempe-edge-mixing.

  \qitem{(b)} If~$G$ is a simple bipartite graph with maximum
  degree~$\Delta$, then~$G$ is $(\Delta+1)$-Kempe-edge-mixing.
\end{theorem}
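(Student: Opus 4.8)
The plan is to use that Kempe-chain recolouring is an involution on the two colours of a chain, so \emph{Kempe-equivalence} of edge-colourings is a genuine equivalence relation; consequently $G$ is $(k+2)$-Kempe-edge-mixing precisely when every $(k+2)$-edge-colouring is Kempe-equivalent to one fixed \emph{reference} colouring. For part~(a) the hypothesis that $G$ is $k$-edge-colourable lets me fix a reference colouring $\varphi$ that uses only the colours $\{1,\dots,k\}$, and it forces $\Delta(G)\le k$, so under any $(k+2)$-edge-colouring every vertex has at least two \emph{missing} colours. It is worth noting that an edge Kempe chain of $G$ is exactly a (vertex) Kempe chain of the line graph $L(G)$, so in principle the vertex statement ``$k\ge\dege(G)+1$ implies $k$-Kempe-mixing'' applies to $L(G)$; but $\dege(L(G))$ can be as large as $2\Delta-2$, so this generic bound is far too weak, and I must instead exploit the edge-colouring (Vizing/König) structure directly.

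First I would transform an arbitrary $(k+2)$-colouring $\psi$ into $\varphi$ by processing the edges in a fixed order and maintaining the invariant that every already-\emph{fixed} edge agrees with $\varphi$ and carries a colour in $\{1,\dots,k\}$, so that the two spare colours $k+1,k+2$ never appear on a fixed edge. To fix an edge $e=uv$ to its target colour $c=\varphi(e)$, observe first that properness of $\varphi$ forces every \emph{other} fixed edge at $u$ or at $v$ to avoid colour $c$; hence colour $c$ can occur at $u$ or $v$ only on non-fixed edges. In the clean case where $c$ is absent at both $u$ and $v$, the $\{c,\psi(e)\}$-chain through $e$ is the single edge $e$ (properness forbids a second chain-edge at either endpoint), so one Kempe swap recolours $e$ to $c$ and touches nothing else. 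In general I would first \emph{clear} colour $c$ from the two endpoints by recolouring the offending non-fixed edges to a spare colour, via short $\{c,s\}$- or $\{s,t\}$-Kempe chains confined to the spare and non-fixed colours, and then apply the clean-case swap.

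The hard part will be exactly this clearing step: recolouring the non-fixed edges of colour $c$ at $u$ and $v$ to spare colours \emph{without} dragging a Kempe chain back through the already-fixed edges. This is a Vizing-fan type manoeuvre, and it is here that having \emph{two} free colours is used essentially — one spare is needed to park colour $c$, while the second supplies the room to reroute when the first spare is already present at the far endpoint. Once the clearing is justified, the induction closes, every $(k+2)$-colouring reaches $\varphi$, and so all of them are mutually Kempe-equivalent, proving~(a).

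For part~(b), bipartiteness gives by König's theorem that $\chi'(G)=\Delta$, so the reference colouring can be taken to use only $\{1,\dots,\Delta\}$ and I have just \emph{one} spare colour available among the $\Delta+1$. The difficulty is that a vertex of degree $\Delta$ now has a \emph{unique} missing colour, so the second-spare trick of part~(a) is unavailable. The plan is to replace it by a genuinely bipartite argument: to clear colour $c$ from $e=uv$, pick a colour $a$ missing at $u$ and a colour $b$ missing at $v$ and swap the $\{a,b\}$-Kempe chain starting at one endpoint, exactly as in the standard König recolouring. The main obstacle is to confine this chain so that it does not disturb fixed edges; here bipartiteness is precisely what saves the extra colour, since every $\{a,b\}$-chain is a path or an even cycle whose path-ends are vertices deficient in $a$ or $b$, and the absence of odd cycles prevents the chain from looping back to create a conflict. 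Controlling these alternating chains, and ordering the edges so that the chains used for clearing stay inside the not-yet-fixed part of $G$, is the crux of reducing the requirement from $\Delta+2$ down to $\Delta+1$.
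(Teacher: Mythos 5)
The survey states this theorem without proof (it is quoted directly from Mohar's paper), so there is no in-paper argument to compare against; judged on its own merits, your proposal is a plan rather than a proof, and the plan has a genuine gap at exactly the point where all the work lies. The parts you do carry out are fine: Kempe equivalence is an equivalence relation, so mixing reduces to reaching one reference colouring $\varphi$; the remark that the line-graph/degeneracy route (Las Vergnas--Meyniel) is too weak is correct; and the ``clean case'' (target colour $c$ missing at both endpoints of $e$, so the $\{c,\psi(e)\}$-chain is the single edge $e$) is correct. But the entire content of part~(a) is the ``clearing step'', which you explicitly defer (``the hard part'') and never carry out --- and the mechanism you sketch for it cannot work as stated. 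An $\{s,t\}$-chain on the two spare colours does avoid all fixed edges, but it also cannot recolour an edge whose current colour is $c$, so it is useless for clearing. A $\{c,s\}$-chain through the offending edge $f=uw$, on the other hand, is \emph{not} ``confined to the spare and non-fixed colours'': it contains edges coloured $c$, and properness of $\varphi$ only guarantees that fixed edges incident to $u$ and $v$ avoid $c$; the chain can leave through $w$ and pass through fixed edges coloured $c$ anywhere in the graph, and swapping it then destroys your invariant. The appeal to ``a Vizing-fan type manoeuvre'' does not fill this hole: a fan rotation recolours an edge at the fan centre to a colour that is still present at the centre (on the next fan edge), so it is not a Kempe swap of the full colouring, and it would in any case disturb fixed edges at the centre. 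So part~(a) has been reduced to a lemma that is neither proved nor provable by the tool you name.

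Part~(b) has the same structure. The K\H{o}nig-type swap you describe (pick $a$ missing at $u$, $b$ missing at $v$, swap the $\{a,b\}$-path from one endpoint; bipartiteness forbids the path joining $u$ to $v$) is the correct \emph{local} move for recolouring the single edge $e$, but it says nothing about the real obstruction, namely that this $\{a,b\}$-path can run through already-fixed edges elsewhere in $G$: the parity argument controls the two ends of the path, not the territory it crosses. You acknowledge this (``ordering the edges so that the chains \ldots{} stay inside the not-yet-fixed part of $G$ is the crux'') but give no such ordering and no reason that one exists; I see no reason it does, since Kempe chains are global objects and the unfixed part need not even be connected. A correct proof must either permit fixed edges to be disturbed and then repaired, or organise the induction differently (for instance, delete an edge, apply induction to $G-e$, and lift the Kempe sequence back to $G$ while managing the interference caused by the colour of $e$). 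In short: a sensible skeleton, with the genuine difficulties correctly located --- but those difficulties \emph{are} the theorem, and they are not overcome here.
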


\noindent
It is unknown if Theorem~\ref{Mo-th1}\,(a) is best possible, nor if the
condition that the graph be bipartite in part~(b) is necessary. A strongest
possible result would be for any simple graph with maximum degree~$\Delta$
to be $(\Delta+1)$-Kempe-edge-mixing.

\section{Moving tokens on graphs}\label{sec-tok}

The 15-puzzle can be considered as a problem involving moving tokens around
a given graph, where a token can be moved along an edge to an empty vertex.
So the two configurations in Figure~\ref{fig:15puzzle} can also be drawn as
in Figure~\ref{fig:15graph}.
\begin{figure}[ht]
  \medskip
  \centering
  \unitlength.6mm
  \begin{picture}(68,68)(-4,-4)
    \put(0,0){\circle{9}}\put(0,0){\makebox(0,0){15}}
    \put(20,0){\circle{9}}\put(20,0){\makebox(0,0){8}}
    \put(40,0){\circle{9}}\put(40,0){\makebox(0,0){7}}
    \put(60,0){\circle{9}}\put(60,0){\makebox(0,0){5}}
    \put(0,20){\circle*{4}}
    \put(20,20){\circle{9}}\put(20,20){\makebox(0,0){6}}
    \put(40,20){\circle{9}}\put(40,20){\makebox(0,0){4}}
    \put(60,20){\circle{9}}\put(60,20){\makebox(0,0){14}}
    \put(0,40){\circle{9}}\put(0,40){\makebox(0,0){9}}
    \put(20,40){\circle{9}}\put(20,40){\makebox(0,0){11}}
    \put(40,40){\circle{9}}\put(40,40){\makebox(0,0){1}}
    \put(60,40){\circle{9}}\put(60,40){\makebox(0,0){10}}
    \put(0,60){\circle{9}}\put(0,60){\makebox(0,0){13}}
    \put(20,60){\circle{9}}\put(20,60){\makebox(0,0){2}}
    \put(40,60){\circle{9}}\put(40,60){\makebox(0,0){3}}
    \put(60,60){\circle{9}}\put(60,60){\makebox(0,0){12}}
    \thicklines
    \put(4.5,0){\line(1,0){11.1}}
    \put(24.5,0){\line(1,0){11.1}}
    \put(44.5,0){\line(1,0){11.1}}
    \put(0,20){\line(1,0){15.6}}
    \put(24.5,20){\line(1,0){11.1}}
    \put(44.5,20){\line(1,0){11.1}}
    \put(4.5,40){\line(1,0){11.1}}
    \put(24.5,40){\line(1,0){11.1}}
    \put(44.5,40){\line(1,0){11.1}}
    \put(4.5,60){\line(1,0){11.1}}
    \put(24.5,60){\line(1,0){11.1}}
    \put(44.5,60){\line(1,0){11.1}}
    \put(0,4.5){\line(0,1){31.2}}
    \put(20,4.5){\line(0,1){11.1}}
    \put(40,4.5){\line(0,1){11.1}}
    \put(60,4.5){\line(0,1){11.1}}
    \put(20,24.5){\line(0,1){11.1}}
    \put(40,24.5){\line(0,1){11.1}}
    \put(60,24.5){\line(0,1){11.1}}
    \put(0,44.5){\line(0,1){11.1}}
    \put(20,44.5){\line(0,1){11.1}}
    \put(40,44.5){\line(0,1){11.1}}
    \put(60,44.5){\line(0,1){11.1}}
  \end{picture}
  \hspace{15mm}
  \begin{picture}(68,68)(-4,-4)
    \put(0,0){\circle{9}}\put(0,0){\makebox(0,0){13}}
    \put(20,0){\circle{9}}\put(20,0){\makebox(0,0){14}}
    \put(40,0){\circle{9}}\put(40,0){\makebox(0,0){15}}
    \put(60,0){\circle*{4}}
    \put(0,20){\circle{9}}\put(0,20){\makebox(0,0){9}}
    \put(20,20){\circle{9}}\put(20,20){\makebox(0,0){10}}
    \put(40,20){\circle{9}}\put(40,20){\makebox(0,0){11}}
    \put(60,20){\circle{9}}\put(60,20){\makebox(0,0){12}}
    \put(0,40){\circle{9}}\put(0,40){\makebox(0,0){5}}
    \put(20,40){\circle{9}}\put(20,40){\makebox(0,0){6}}
    \put(40,40){\circle{9}}\put(40,40){\makebox(0,0){7}}
    \put(60,40){\circle{9}}\put(60,40){\makebox(0,0){8}}
    \put(0,60){\circle{9}}\put(0,60){\makebox(0,0){1}}
    \put(20,60){\circle{9}}\put(20,60){\makebox(0,0){2}}
    \put(40,60){\circle{9}}\put(40,60){\makebox(0,0){3}}
    \put(60,60){\circle{9}}\put(60,60){\makebox(0,0){4}}
    \thicklines
    \put(4.5,0){\line(1,0){11.1}}
    \put(24.5,0){\line(1,0){11.1}}
    \put(44.5,0){\line(1,0){15.6}}
    \put(4.5,20){\line(1,0){11.1}}
    \put(24.5,20){\line(1,0){11.1}}
    \put(44.5,20){\line(1,0){11.1}}
    \put(4.5,40){\line(1,0){11.1}}
    \put(24.5,40){\line(1,0){11.1}}
    \put(44.5,40){\line(1,0){11.1}}
    \put(4.5,60){\line(1,0){11.1}}
    \put(24.5,60){\line(1,0){11.1}}
    \put(44.5,60){\line(1,0){11.1}}
    \put(0,4.5){\line(0,1){11.1}}
    \put(20,4.5){\line(0,1){11.1}}
    \put(40,4.5){\line(0,1){11.1}}
    \put(60,0){\line(0,1){15.6}}
    \put(0,24.5){\line(0,1){11.1}}
    \put(20,24.5){\line(0,1){11.1}}
    \put(40,24.5){\line(0,1){11.1}}
    \put(60,24.5){\line(0,1){11.1}}
    \put(0,44.5){\line(0,1){11.1}}
    \put(20,44.5){\line(0,1){11.1}}
    \put(40,44.5){\line(0,1){11.1}}
    \put(60,44.5){\line(0,1){11.1}}
  \end{picture}
  \medskip
  \caption{Two configurations of the 15-puzzle on the $4\times 4$ grid}
  \label{fig:15graph}
\end{figure}
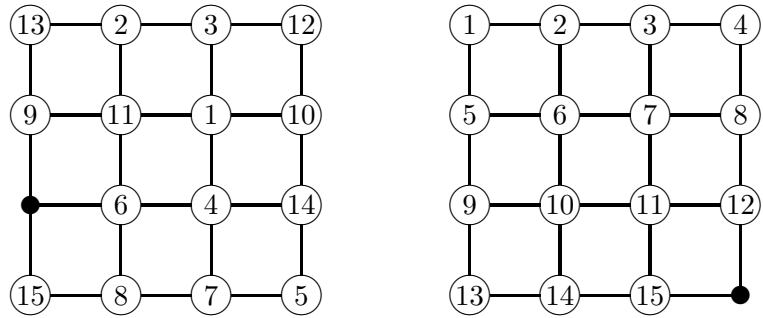

Looking at the 15-puzzle in this way immediately suggests all kind of
generalisations. An obvious generalisation is to play the game on different
graphs. But we can also change the number of tokens, or the way the tokens
are labelled. In this section we consider some of the variants that have
been studied in the literature.

\subsection{Labelled tokens without restrictions}

There is an obvious generalisation of the 15-puzzle. For a given graph
on~$n$ vertices, place $n-1$ tokens labelled 1 to $n-1$ on different
vertices. The allowed moves are ``sliding'' a token along an edge onto the
unoccupied vertex. The central question is if each of the $n!$ possible
token configurations can be obtained from one another by a sequence of
token moves. A complete answer to this was given in Wilson~\cite{Wi}. For a
given graph~$G$ on~$n$ vertices, he defines the \emph{puzzle graph
  $\puz(G)$} as the graph that has as vertex set all possible placements of
the $n-1$ tokens on~$G$, and two configurations are adjacent if they can be
obtained from one another by a single move.

\begin{theorem}[\normalfont Wilson~\cite{Wi}]\label{Wi-th1}\quad\\*
  Let~$G$ be a 2-connected graph on $n\ge3$ vertices. Then $\puz(G)$ is
  connected, except in the following cases:

  \qitem{(a)} $G$ is a cycle on $n\ge4$ vertices (in which case $\puz(G)$
  has $(n-2)!$ components);

  \qitem{(b)} $G$ is a bipartite graph different from a cycle (then
  $\puz(G)$ has two components);

  \qitem{(c)} $G$ is the graph~$\theta_0$ in Figure~\ref{fig:th0} (then
  $\puz(G)$ has six components).

  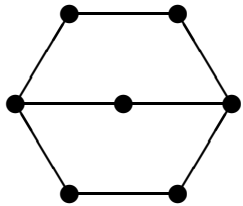
\begin{figure}[ht]
    \medskip
    \centering
    \unitlength0.8mm
    \begin{picture}(40,34)(-2,-2)
      \put(9,0){\circle*{3}}
      \put(27,0){\circle*{3}}
      \put(0,15){\circle*{3}}
      \put(18,15){\circle*{3}}
      \put(36,15){\circle*{3}}
      \put(9,30){\circle*{3}}
      \put(27,30){\circle*{3}}
      \thicklines
      \put(9,0){\line(1,0){18}}
      \put(9,0){\line(-3,5){9}}
      \put(27,0){\line(3,5){9}}
      \put(0,15){\line(1,0){36}}
      \put(0,15){\line(3,5){9}}
      \put(36,15){\line(-3,5){9}}
      \put(9,30){\line(1,0){18}}
    \end{picture}
    \medskip
    \caption{The exceptional graph~$\theta_0$}
    \label{fig:th0}
  \end{figure}
\end{theorem}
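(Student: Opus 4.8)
The plan is to recast the connectivity of $\puz(G)$ as a question about a permutation group and then to identify that group in each of the listed cases. First I would fix a base vertex $v_0$ as the ``home'' of the empty slot (the hole). Since $G$ is 2-connected, the hole can be driven to any vertex and back, so every configuration lies in the same component as one with the hole at $v_0$; and for such configurations the $n-1$ tokens occupy the remaining vertices, so the configuration is simply an element of $S_{n-1}$. Two such configurations lie in the same component of $\puz(G)$ exactly when they differ by a token-permutation realised by a closed hole-walk based at $v_0$. These realisable permutations form a subgroup $H\leq S_{n-1}$, the \emph{puzzle group}, and the number of components of $\puz(G)$ equals the index $[S_{n-1}:H]$. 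The whole theorem thus reduces to computing $H$.

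The second step is to understand the elements of $H$. The basic building block is that when the hole traverses a cycle of length $\ell$ through $v_0$ and returns, the $\ell-1$ tokens on the other cycle-vertices are cyclically rotated, giving an $(\ell-1)$-cycle. A crucial parity observation is that a single move transposes the hole with a neighbouring token, so a closed walk of $m$ moves realises a token-permutation of sign $(-1)^m$ (the hole itself returns, hence is fixed). Consequently, if $G$ is bipartite every closed walk is even and $H\subseteq A_{n-1}$, whereas if $G$ is non-bipartite some odd closed walk yields an odd permutation in $H$.

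The heart of the argument is to show that, apart from the exceptional graphs, $H$ contains every $3$-cycle and hence $H\supseteq A_{n-1}$. When $G$ is 2-connected but not a cycle it contains two cycles through $v_0$ overlapping in a path (a theta-configuration); composing the rotation around one with that around the other and undoing them — a commutator — fixes all but three tokens and produces a $3$-cycle. Since $3$-cycles generate $A_{n-1}$, this gives $H\supseteq A_{n-1}$, and combining with the parity computation finishes three cases at once. For a cycle $C_n$ (case (a)) the hole can only rotate all tokens, so $H=\langle\sigma\rangle$ for a single $(n-1)$-cycle $\sigma$, whence $|H|=n-1$ and $[S_{n-1}:H]=(n-2)!$. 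For a bipartite non-cycle (case (b)) parity forces $H\subseteq A_{n-1}$, so $H=A_{n-1}$ and the index is $2$. For a non-bipartite graph (the generic case) the odd permutation together with $A_{n-1}$ gives $H=S_{n-1}$, so $\puz(G)$ is connected.

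The main obstacle is twofold. First, the $3$-cycle generation must be carried out uniformly over all 2-connected non-cycle graphs, which needs a structural input (an ear or theta decomposition) and an honest evaluation of the relevant commutator; this is where most of the routine technical work lives. Second, and genuinely special, is the graph $\theta_0$ of case (c): here the generic commutator fails to produce a $3$-cycle, and a direct analysis of the moves on its $6$ tokens shows that $H$ is a specific proper subgroup of $S_6$ of index $6$ (in fact isomorphic to $PGL(2,5)$). Establishing that the index is exactly $6$ — neither larger nor smaller — requires an explicit computation on $\theta_0$ that cannot be absorbed into the general argument, and this is the part of the theorem that must be handled by hand.
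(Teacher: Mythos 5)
A point of reference first: the paper itself contains no proof of Theorem~\ref{Wi-th1}; it cites Wilson~\cite{Wi} and remarks only that the proof is ``quite algebraic in nature'', viewing each configuration as a permutation of $n$ labels and each move as a transposition involving the empty label. Your framework is precisely that algebraic approach, and its skeleton is sound: the reduction of connectivity of $\puz(G)$ to computing the puzzle group $H\le S_{n-1}$ of closed hole-walks based at $v_0$, the identification of components with cosets of $H$, the parity observation giving $H\subseteq A_{n-1}$ for bipartite graphs and an odd element for non-bipartite ones, the cyclic group of order $n-1$ for cycles (hence $(n-2)!$ components), and the exceptional order-$120$ group for $\theta_0$ (the image of the exotic embedding $S_5\le S_6$, i.e.\ $PGL(2,5)$, of index $6$) are all correct ingredients.

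However, the heart of your argument has a genuine gap: the claim that, in any 2-connected non-cycle graph, the commutator of the rotations around the two cycles of a theta subgraph ``fixes all but three tokens and produces a 3-cycle'' is false in general. The two rotations overlap in \emph{all} tokens lying on the shared path, not in a single token, and the commutator of two permutations whose supports meet in more than one point need not be a 3-cycle. Concretely, take $G=K_{2,3}$ (the theta graph whose three paths all have length two), with parts $\{u,w\}$ and $\{a,b,c\}$ and the hole based at $u$: the rotations around the two 4-cycles through $u$ are $\sigma=(a\,b\,w)$ and $\tau=(b\,c\,w)$, and a direct computation gives $[\sigma,\tau]=(a\,w)(b\,c)$, a double transposition. (Wilson's theorem still holds there, since $\langle\sigma,\tau\rangle=A_4$, but not via your mechanism.) This failure is not a removable technicality: proving that every 2-connected non-cycle graph other than $\theta_0$ satisfies $H\supseteq A_{n-1}$ is the actual content of Wilson's proof, and it requires an induction on ear decompositions together with a case analysis of the theta graphs themselves --- the analysis in which $\theta_0$ emerges as the unique exception. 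Your outline instead asserts an incorrect uniform mechanism for the generic case and defers the rest to ``routine technical work'', so the proof does not close; the remaining parts (the coset reduction, parity, the cycle case, and the by-hand treatment of $\theta_0$) are fine.
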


\noindent
The condition in Wilson's theorem that the graph~$G$ is 2-connected is
necessary. It is obvious that for a non-connected graph~$G$, $\puz(G)$ is
never connected; while if~$G$ has a cut-vertex~$v$, then a token can never
be moved from one component of $G-v$ to another component.

The proof of Theorem~\ref{Wi-th1} in~\cite{Wi} is quite algebraic in
nature. This is not surprising, since each token configuration can be
considered as a permutation of~$n$ labels (with the unoccupied vertex
having the label `empty'). Within that context, a move is just a particular
type of transposition involving two labels (one of them always being the
`empty' label). Although Wilson's theorem is not formulated in algorithmic
terms, it is easy to derive from it a polynomial time algorithm to decide
if $\puz(G)$ is connected for a given input graph~$G$.

Since Wilson's work (and often independent of it), many generalisations
have been considered in the literature. To describe these in some detail,
we need some further notation. Instead of assuming that all tokens are
different, we will assume that some tokens can be identical. So tokens come
in certain \emph{types} (other authors use \emph{colours} for this), where
tokens of the same type are considered indistinguishable (and hence
swapping tokens of the same type will not lead to a different
configuration). A collection of tokens can have~$k_1$ tokens of type~1,
$k_2$ tokens of type~2, etc. We denote such a typed set by
$(k_1,k_2,\ldots,k_p)$, so that $k_1+\dots+k_p$ is the total number of
tokens. A repeated sequence of~$p$ ones can be denoted as~$1^{(p)}$.

Given a graph~$G$ and token set $(k_1,\ldots,k_p)$, the \emph{puzzle graph
  $\puz(G;k_1,\dots,k_p)$} is the graph that has as vertex set all possible
token placements on~$G$ of~$k_1$ tokens of type~1, $k_2$ tokens of type~2,
etc., and two configurations are adjacent if they can be obtained from one
another by a single move of a token to a neighbouring empty vertex. This
means that if~$G$ is a graph on~$n$ vertices, then
$\puz(G)\cong\puz(G;1^{(n-1)})$. We will always assume that if~$G$ has~$n$
vertices, then $k_1+\dots+k_p\le n$ and $k_1\ge k_2\ge\dots\ge k_p\ge1$.

A first generalisation of Wilson's work, in which there may be fewer than
$n-1$ tokens, was considered by Kornhauser et al.~\cite{KM}. They showed
that if~$G$ is a graph on~$n$ vertices, then for any two configurations
from $\puz(G,1^{(p)})$, it can be decided in polynomial time if these two
configurations are in the same component, i.e., if one configuration can be
obtained from the other by a sequence of token moves. Additionally, they
showed that if such a transformation is possible, the number of moves
required is at most $O(n^3)$, and the order of this bound is best possible.

This work was further extended to token configurations with types as above;
first to trees by Auletta et al.~\cite{AM}, and later to general graphs by
Goraly~\& Hassin~\cite{GH}. Their results prove that for any graph~$G$ and
typed token set $(k_1,\ldots,k_p)$, given two configurations from
$\puz(G;k_1,\ldots,k_p)$, it can be decided in linear time if one
configuration can be obtained from the other. Notice that by the result for
all tokens being different mentioned earlier, we immediately have that for
a graph on~$n$ vertices, more than $O(n^3)$ token moves are never needed
between two configurations.

The work mentioned in the previous paragraphs does not give an explicit
characterisation of the puzzle graphs $\puz(G;k_1,\ldots,k_p)$ that are
connected (i.e., where any two token configurations of the right type can
be obtained from one another by a sequence of token moves). In order to
describe such a characterisation, we need some further terminology
regarding specific vertex-cut-sets in graphs. For a connected graph~$G$, a
\emph{separating path of size one} in~$G$ is a cut-vertex. A
\emph{separating path of size two} is a cut-edge $e=v_1v_2$ so that both
components of $G-e$ have at least two vertices. Finally, for $\ell\ge3$, a
\emph{separating path of size~$\ell$} is a path $P=v_1v_2\ldots v_\ell$
in~$G$, such that the vertices $v_2,\ldots,v_{\ell-1}$ have degree two,
$G-\{v_2,\ldots,v_{\ell-1}\}$ has exactly two components, one
containing~$v_1$ and one containing~$v_\ell$, and where both components
have at least two vertices.

\begin{theorem}[\normalfont Brightwell et
  al.~\cite{BHT}]\label{BvdHS-th1}\quad\\*
  Let~$G$ be a graph on $n\ge3$ vertices and $(k_1,\ldots,k_p)$ be a token
  set, with $k_1+\dots+k_p\le n$ and $k_1\ge k_2\ge\dots\ge k_p\ge1$. Then
  $\puz(G;k_1,\ldots,k_p)$ is disconnected if and only if at least one of
  the following cases holds:

  \qitem{(a)} $G$ is disconnected, and $p\ge2$ or $k_1\le n-1$;

  \qitem{(b)} $p\ge2$ and $k_1+\dots+k_p=n$;

  \qitem{(c)} $G$ is a path and $p\ge2$;

  \qitem{(d)} $G$ is a cycle, $p=2$ and $k_2\ge2$; or~$G$ is a cycle
  and $p\ge3$;

  \qitem{(e)} $G$ is a 2-connected bipartite graph and the token set is
  $(1^{(n-1)})$;

  \qitem{(f)} $G$ is the graph~$\theta_0$ in Figure~\ref{fig:th0} and the
  token set is one of $(2,2,2)$, $(2,2,1,1)$, $(2,1,1,1,1)$, $(1^{(6)})$;

  \qitem{(g)} $G$ has connectivity one and contains a separating path of
  size at least $n-(k_1+k_2+\dots+k_p)$.
\end{theorem}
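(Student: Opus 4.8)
The statement is an equivalence, so I would prove both that each listed case forces $\puz(G;k_1,\ldots,k_p)$ to be disconnected and, conversely, that avoiding all of (a)--(g) yields a connected puzzle graph; the first direction is routine and the second is where the real work lies. For the forward direction each case carries its own preserved invariant. In (a) the number of tokens of each type inside a fixed component of~$G$ is untouched by any move; in (b) there is no empty vertex, so no move is possible and distinct typed configurations are isolated. Cases (c) and (d) rest on the fact that on a path (respectively a cycle) the linear (respectively cyclic) sequence in which the token \emph{types} are encountered is invariant up to the obvious symmetries, which already separates configurations once~$p$ is large enough or a type is repeated. Cases (e) and (f) are exactly the exceptional outcomes of Wilson's Theorem (Theorem~\ref{Wi-th1}(b),(c)) read off for the all-distinct token set~$1^{(n-1)}$: the bipartite parity obstruction and the six-component behaviour of~$\theta_0$. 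For (g), a separating path $P=v_1v_2\ldots v_\ell$ whose length is at least the number $n-\sum_i k_i$ of empty vertices acts as a corridor: its internal degree-two vertices force tokens to traverse it in first-in-first-out order, and since the corridor can never be fully emptied, some token is permanently trapped inside with its relative position frozen.

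For the converse, assume none of (a)--(g) holds. I would first strip away the trivial situations: $G$ is connected (else (a)), and there is at least one empty vertex, since $\sum_i k_i=n$ together with the failure of (b) forces $p=1$ and hence a single configuration. Next I would treat the core case in which $G$ is $2$-connected and is neither a cycle nor $\theta_0$ (these being absorbed by the failure of (d) and (f)). Here Wilson's Theorem states that the \emph{labelled} puzzle graph $\puz(G;1^{(n-1)})$ is connected, unless $G$ is bipartite, in which case it splits into two components distinguished by permutation parity.

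The delicate step is passing from labelled to typed tokens. The typed puzzle graph is the quotient of the labelled one under relabelling tokens of equal type, so it is connected exactly when the permutation group generated by the puzzle moves, together with the symmetric groups on each type class, acts transitively on typed configurations. When $G$ is non-bipartite Wilson already delivers the full symmetric group $S_{n-1}$, so transitivity is immediate. When $G$ is bipartite the sole obstruction is parity, and it collapses the moment we can realise an odd permutation that does \emph{not} change the typed configuration. A repeated type ($k_i\ge2$) does precisely this: swapping two like tokens is an odd permutation but produces the same typed configuration, so a single typed configuration admits lifts of both parities and can therefore reach any target. Alternatively, two or more empty vertices remove the parity invariant altogether, since by Kornhauser et al.~\cite{KM} a $2$-connected graph with at least two holes already generates all permutations of the occupied tokens. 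The only way to avoid both possibilities simultaneously is the all-distinct set $1^{(n-1)}$ with a single empty vertex, which is exactly the excluded case~(e).

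The main obstacle is the remaining situation in which $G$ has connectivity one. Here I would decompose $G$ along its cut-vertices into blocks, apply the $2$-connected analysis inside each block, and then argue that tokens can be shuffled freely across cut-vertices and through separating paths in any desired multiplicity and order. The failure of (g) is the crucial hypothesis: it guarantees that every separating path is strictly shorter than the number of empty vertices, so every corridor can be completely cleared of tokens, and once empty a corridor imposes no ordering constraint and admits passage in any arrangement. Turning this "enough empty room to clear every corridor" condition into an explicit routing that attains an arbitrary target configuration is the technical heart of~\cite{BHT}: one must verify that the maneuvering space quantified by the separating-path bound genuinely suffices both to permute tokens within blocks and to redistribute them across cut-vertices, and that no residual parity or ordering invariant survives once all corridors are short enough. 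Establishing that this routing never gets stuck, in full generality over all block-and-corridor configurations, is the step I expect to be hardest and most case-laden.
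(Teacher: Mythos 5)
You should note first that the survey itself contains no proof of this theorem: it is quoted from Brightwell et al.~\cite{BHT}, a paper listed as ``in preparation''. So there is no proof in the paper to compare against, and your proposal has to stand on its own merits; as it stands it is a plausible roadmap rather than a proof. The necessity direction is mostly sound (component invariants, frozen configurations with no hole, linear and cyclic order, Wilson's exceptions), but two of its cases are glossed over. For case~(f), only the token set $1^{(6)}$ is ``read off'' from Theorem~\ref{Wi-th1}; for $(2,2,2)$, $(2,2,1,1)$ and $(2,1,1,1,1)$ you must additionally verify that identifying tokens of equal type does not merge any of the six components of $\puz(\theta_0)$ — a quotient of a disconnected graph can perfectly well be connected, which is exactly the phenomenon your own parity argument exploits elsewhere. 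For case~(g), the ``FIFO corridor / trapped token'' invariant is asserted rather than proved, and it needs careful formulation: with $p=1$ it is vacuous (all tokens identical, and indeed $\puz(G;k)$ is connected for any connected~$G$), so whatever invariant one writes down must genuinely use the presence of distinguishable tokens.

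The more serious gaps are in the sufficiency direction. First, your reduction ``$G$ non-bipartite and 2-connected $\Rightarrow$ Wilson gives the full symmetric group'' is false for $\theta_0$ itself: $\theta_0$ is non-bipartite and 2-connected, yet $\puz(\theta_0)$ has six components. Failure of~(f) does not exclude $G=\theta_0$; it only excludes the four listed token sets. So for $\theta_0$ with one hole and a token set such as $(3,3)$, $(4,2)$ or $(5,1)$, your argument proves nothing, and one needs a genuine analysis of which type-preserving permutation groups, together with the puzzle group of $\theta_0$, act transitively — precisely the case analysis that clause~(f) encodes. Second, the heart of the theorem — showing that connectivity-one graphs avoiding~(g) have connected puzzle graphs, via block decomposition and routing tokens through corridors that can be emptied — is declared as a plan and explicitly left open in your final paragraph. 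Since that construction (together with the $\theta_0$ analysis) is where essentially all of the work of~\cite{BHT} lies, the proposal is an honest and largely correct strategy outline, but it contains one incorrect reduction and omits the main technical argument, so it does not establish the statement.
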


\noindent
Note in particular that if~$G$ is a 2-connected graph on~$n$ vertices
different from a cycle, and $(k_1,\ldots,k_p)$ is a token set with
$k_1+\dots+k_p\le n-2$, then $\puz(G;k_1,\ldots,k_p)$ is always connected.

It is possible to extend this theorem to a full characterisation of any two
token configurations from any puzzle graph $\puz(G;k_1,\ldots,k_p)$ that
are in the same component (hence extending the algorithmic results from
Goraly~\& Hassin~\cite{GH}). This rather technical and long result can also
be found in~\cite{BHT}.

The results mentioned above mean that it is quite straightforward to check
if one can go from any given token configuration to any other one. So a
next natural question is to ask if it is possible to find the shortest
path, i.e., to find the minimum number of token moves required between two
given token configurations in the same component of
$\puz(G;k_1,\ldots,k_p)$. This leads to the following decision problem.

\bigskip
\begin{probl}{Shortest-Token-Moves-Sequence}
\item[\normalfont\textit{Instance}:\hfill] A graph~$G$, a token set
  $(k_1,\ldots,k_p)$, two token configurations~$\alpha$ and~$\beta$ on~$G$
  of type $(k_1,\ldots,k_p)$, and a positive integer~$N$.
\item[\normalfont\textit{Question}:\hfill] Is it possible to transform
  configuration~$\alpha$ into configuration~$\beta$ using at most~$N$ token
  moves?
\end{probl}

\begin{theorem}\label{short-th1}\quad\\*
  Restricted to the case that the token sets are~$(k)$ (i.e., all tokens
  are the same), \textsc{Shortest-Token-Moves-Sequence} is
  in~{\normalfont\textbf{P}}.
\end{theorem}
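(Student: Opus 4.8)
The claim is that when all tokens carry the same type — the token set is $(k)$, meaning we have $k$ indistinguishable tokens on a graph $G$ with $n$ vertices and $n-k$ empty vertices — we can decide in polynomial time whether $\beta$ is reachable from $\alpha$ in at most $N$ moves. The key structural simplification in this case is that the tokens are interchangeable, so a configuration is nothing more than a subset $S\subseteq V(G)$ of size $k$ (the occupied vertices), and a single move slides a token from an occupied vertex to an adjacent empty vertex. Equivalently, thinking of the empty vertices, a move slides one "hole" along an edge; the set of $n-k$ holes evolves by exactly the same local rule.

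**The plan.** The plan is to reduce the decision problem to a shortest-path or minimum-cost-flow computation and argue that the optimal number of moves has a clean combinatorial formula. First I would observe that since tokens are indistinguishable, a move of a token from $u$ to an empty neighbour $w$ is the reverse of moving the hole at $w$ to $u$; so counting token moves is the same as counting hole moves. Next, I would try to show that the minimum number of moves to transform $\alpha$ into $\beta$ equals the cost of an optimal way of routing the tokens that must change position along paths in $G$ — concretely, one sets up a transportation/assignment problem in which each token present in $\alpha$ but not in $\beta$ must be shipped to a vertex occupied in $\beta$ but not in $\alpha$, with the cost of a route being its length (number of edges), and then argues the total number of slide moves equals (or is controlled by) the minimum total routing cost. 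This assignment-and-routing problem is a minimum-cost bipartite matching / min-cost-flow instance, which is solvable in polynomial time; comparing its optimum to $N$ answers the decision question.

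**The main obstacle.** The hard part will be establishing that the natural lower bound coming from the assignment problem is actually achievable, i.e.\ that the minimum number of slide moves really equals the min-cost routing value rather than merely being bounded below by it. The difficulty is congestion: several tokens may need to traverse the same region of $G$, and tokens can obstruct one another, so a naive routing need not be realisable by legal single-token slides without extra "shuffling" moves to let tokens pass. I would handle this by invoking the connectivity machinery already developed — the characterisation of connected puzzle graphs (Theorem~\ref{BvdHS-th1}) and the polynomial reachability and $O(n^3)$ move bounds of Kornhauser et al.\ and Goraly~\& Hassin — to guarantee that, whenever two configurations of the same type lie in the same component, the tokens have enough freedom (enough empty vertices and enough connectivity) to be permuted locally at bounded extra cost. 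The technical heart is therefore a rerouting lemma showing that any min-cost routing can be scheduled as a sequence of legal moves whose length matches the routing cost, exploiting the indistinguishability of tokens to absorb all permutation overhead for free.

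**Wrapping up.** Once the equivalence between minimum move count and an efficiently computable combinatorial quantity is in hand, the decision procedure is immediate: compute the optimum (via min-cost flow, say), check first that $\alpha$ and $\beta$ are in the same component of $\puz(G;k)$ using the polynomial reachability test, and then compare the optimal move count against the given bound~$N$. Since every ingredient — reachability testing, the flow computation, and the scheduling bound — runs in polynomial time, the whole procedure does too, giving membership in \textbf{P}. I would expect the write-up to lean heavily on the single-type structure; the contrast with the general typed case (where shortest-path questions can become hard) is exactly what this restriction buys.
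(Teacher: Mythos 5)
Your high-level reduction is the same as the paper's: form the complete bipartite graph on the occupied sets $U$ of~$\alpha$ and $V$ of~$\beta$, weight each pair by its distance in~$G$, compute a minimum-weight perfect matching~$M$ of total weight~$W$ in polynomial time, and note that~$W$ is an obvious lower bound on the number of moves. The genuine gap is in the step you yourself call the technical heart: the claim that some legal schedule achieves exactly~$W$ moves. You propose to obtain this from the connectivity machinery (Theorem~\ref{BvdHS-th1}, Kornhauser et al., Goraly~\& Hassin), which guarantees reachability and an $O(n^3)$ move bound and would let tokens be ``permuted locally at bounded extra cost''. That cannot work: the decision problem compares the optimum against an arbitrary threshold~$N$, so ``matching cost plus bounded overhead'' decides nothing --- you need the minimum to equal~$W$ exactly. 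Moreover, for identical tokens connectivity is trivial anyway (two configurations are connected precisely when they put the same number of tokens on each component of~$G$), and those cited results say nothing about \emph{minimum} move counts; indeed the same machinery is available when tokens are labelled, where the shortest-move problem is \textbf{NP}-complete, so no argument invoking only that machinery can establish your claimed equality.

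What is missing is a short uncrossing induction on~$W$, which is how the paper closes exactly this gap. If $W>0$, some $v_1\in V$ is unoccupied in~$\alpha$; let $u_1$ be matched to it along a shortest path~$P_{11}$. If $P_{11}$ meets no other occupied vertex, slide the token from~$u_1$ to~$v_1$ in exactly $w_{11}$ legal moves and induct on the smaller total weight. Otherwise let~$u_i$ be the occupied vertex on~$P_{11}$ nearest to~$v_1$, and exchange the matching: route~$u_1$ to~$v_i$ (along $P_{11}$ to~$u_i$, then along~$P_{ii}$) and~$u_i$ to~$v_1$ (along the tail of~$P_{11}$). The exchange preserves the total length, so by minimality of~$M$ the new routing is again optimal, and now the tail from~$u_i$ to~$v_1$ is unobstructed, so that token can slide to~$v_1$ and induction applies. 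This exchange step is the precise sense in which indistinguishability ``absorbs the permutation overhead for free''; without it (or an equivalent argument) your formula for the optimum, and hence your decision procedure, is unjustified.
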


\begin{proof}
  We can assume that the given graph~$G$ is connected. (Since two
  configurations can be transformed into one another if and only if this
  can be done for the configurations restricted to the components of the
  graph.) Given two token configurations~$\alpha$ and~$\beta$ of~$k$
  identical tokens on~$G$, let $U=\{u_1,\ldots,u_k\}$ be the set of
  vertices containing a token in~$\alpha$, and $V=\{v_1,\ldots,v_k\}$ be
  the same for~$\beta$.

  Form a complete bipartite graph~$K_{k,k}$ with parts~$U$ and~$V$. For
  each edge $e_{ij}=u_iv_j$, define the weight~$w_{ij}$ of~$e_{ij}$ as the
  length of the shortest path from~$u_i$ to~$v_j$ in~$G$ (and denote
  by~$P_{ij}$ such a shortest path in~$G$). It is well-known that a minimum
  weight perfect matching in a weighted balanced complete bipartite graph
  can be found in polynomial time (for instance using the Hungarian method,
  see, e.g., Schrijver~\cite[Section~17.2]{Sch}); let~$M$ be such a minimum
  weight perfect matching.

  We can assume that $M=\{u_1v_1,\ldots,u_kv_k\}$. Let~$W$ be the total
  weight in~$M$, i.e., the sum of the lengths of the paths~$P_{ii}$,
  $i=1,\ldots,k$. It is obvious that any way to move the tokens from~$U$
  to~$V$ will use at least~$W$ steps. We will prove that in fact it is
  possible to do so using exactly~$W$ steps. We use induction on~$W$,
  observing that if $W=0$, then $U=V$, so $\alpha\equiv\beta$, and no
  tokens have to be moved.

  If $W>0$, then at least one element of~$V$, say~$v_1$, has no token on it
  in~$\alpha$. If $V(P_{11})\cap U=\{u_1\}$, then we can just move the
  token from~$u_1$ along~$P_{11}$ to~$v_1$, and are done by induction. So
  assume that~$P_{11}$ contains some other elements from~$U$. Take~$u_i$ to
  be the element from $V(P_{11})\cap U$ nearest to~$v_1$ on~$P_{11}$.
  Define new paths~$P'_{1i}$ and~$P'_{i1}$ as follows. Let~$P'_{1i}$ be the
  path formed by going from~$u_1$ along~$P_{11}$ to~$u_i$ and then continue
  along~$P_{ii}$ to~$v_i$; while~$P'_{i1}$ is just the path from~$u_i$
  along~$P_{11}$ to~$v_1$. It is clear that the sum of the lengths
  of~$P'_{1i}$ and~$P'_{i1}$ is the same as that sum for~$P_{11}$
  and~$P_{ii}$, so we can replace~$P_{11}$ and~$P_{ii}$ by~$P'_{1i}$
  and~$P'_{i1}$ to get another set of paths from~$U$ to~$V$ of minimum
  total length. But in this new collection of paths, we can just move~$u_i$
  along~$P'_{i1}$ to~$v_1$, and then continue by induction.
\end{proof}

\noindent
It was proved by Goldreich~\cite{Go}\footnote{Although~\cite{Go} was
  published in 2011, it is remarked in it that the work was already
  completed in 1984, and appeared as a technical report from the Technion
  in 1993.} that \textsc{Shortest-Token-Moves-Sequence} is
\textbf{NP}-complete for the case Wilson considered, i.e., if all tokens
are different. So somewhere between all tokens the same and all tokens
different, the problem switches from being in~\textbf{P} to being
\textbf{NP}-complete. In fact, the change-over happens as soon as not all
tokens are identical.

\begin{theorem}\label{short-th2}\quad\\*
  Restricted to the case that the token sets are~$(k-1,1)$ (i.e., there is
  one special token and all others are identical),
  \textsc{Shortest-Token-Moves-Sequence} is
  {\normalfont\textbf{NP}}-complete.
\end{theorem}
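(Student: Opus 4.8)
The plan is to prove membership in \textbf{NP} and then \textbf{NP}-hardness, the latter by reduction from \textsc{3-Sat}.

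\emph{Membership.} Recall the $O(n^3)$ bound on the number of moves between two configurations of the same component, noted above: there is an absolute constant~$c$ so that any two configurations lying in the same component of a puzzle graph on~$n$ vertices are joined by at most $c\,n^3$ token moves. I would split on the size of~$N$. If $N\ge c\,n^3$, the instance is a ``yes''-instance exactly when $\alpha$ and~$\beta$ lie in the same component of $\puz(G;k-1,1)$, which is decidable in polynomial time via the connectivity characterisation of Theorem~\ref{BvdHS-th1} and its configuration-level refinement (extending the linear-time reachability test of Goraly~\& Hassin). If $N<c\,n^3$, then whenever the answer is ``yes'' there is a witnessing sequence of at most~$N$ moves; since such a sequence has polynomial length, a non-deterministic algorithm can guess it and check in polynomial time that each intermediate placement is legal and that~$\beta$ is reached. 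Either way the problem is in \textbf{NP}, independently of the token type.

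\emph{Hardness.} For the reduction I would start from a 3-CNF formula~$\varphi$ with variables $x_1,\ldots,x_r$ and clauses $C_1,\ldots,C_s$, and build a graph~$G$ on which a single labelled token~$t$ (the lone type-2 token) must travel from a fixed source~$s$ to a fixed target~$z$, while the $k-1$ indistinguishable tokens must be delivered to a prescribed target set. The board would consist of a ``spine'' that~$t$ must traverse, broken into~$r$ variable-gadgets followed by a verification region. Each variable-gadget would offer~$t$ a choice between two internally disjoint corridors of equal length, a ``true''- and a ``false''-corridor, the choice recording the value of~$x_i$; the indistinguishable tokens would be seeded so that exactly one corridor can be cleared cheaply, while clearing the other (or backtracking) costs strictly more. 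Each clause~$C_j$ would be a gadget attached to the three literal-corridors of its variables, passable by~$t$ within the budget only if at least one chosen corridor corresponds to a satisfying literal. Finally~$N$ would be set to the cost of a canonical traversal plus the forced cost of sliding the fillers into their target set, tuned so that $N$ moves suffice if and only if~$\varphi$ is satisfiable.

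The completeness direction is routine: a satisfying assignment dictates one corridor per variable-gadget, the labelled token follows the corresponding route, and the fillers are slid to their destinations along pre-planned, mutually non-interfering shortest paths, for a total of exactly~$N$ moves. \emph{The main obstacle is soundness}: one must show that no move-sequence of length at most~$N$ exists unless~$\varphi$ is satisfiable. The difficulty is precisely that the indistinguishable tokens give a solver enormous freedom---any filler may occupy any target slot, and they may be shuffled through the corridors in unplanned ways---so that, unlike in the all-identical case of Theorem~\ref{short-th1}, the optimum is not simply a transportation value. The heart of the proof would be a charging argument that assigns each move to a variable- or clause-gadget, shows that each gadget must absorb a fixed quota of moves regardless of how the fillers are rerouted, and concludes that meeting the global budget~$N$ forces one coherent corridor choice per variable satisfying every clause. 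Making the gadget gaps large enough to dominate all such ``cheating'' reroutings, while keeping~$N$ polynomially bounded, is the technically delicate part.
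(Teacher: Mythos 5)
Your membership argument is fine and essentially the standard one: by the $O(n^3)$ bound of Kornhauser et al.~\cite{KM} (which, as the survey notes just before Theorem~\ref{BvdHS-th1}, carries over to typed token sets), either $N\ge c\,n^3$ and the question degenerates to reachability in $\puz(G;k-1,1)$, decidable in polynomial time by Goraly~\& Hassin~\cite{GH}, or any ``yes''-instance has a witnessing sequence of fewer than $c\,n^3$ moves that a non-deterministic algorithm can guess and verify. No complaints there.

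The genuine gap is in the hardness half: what you have written is a plan for a reduction, not a reduction. The variable corridors, the clause gadgets, the seeding of the fillers and the value of~$N$ are all unspecified, and the soundness direction --- that no sequence of at most~$N$ moves exists when~$\varphi$ is unsatisfiable --- is exactly the step you defer to a future ``charging argument''. You correctly identify why this step is hard (the indistinguishable fillers can be rerouted arbitrarily, so the optimum is not a transportation value as in Theorem~\ref{short-th1}), but identifying the difficulty is not overcoming it; nothing in your sketch excludes that some unplanned shuffling of fillers beats the budget, and that exclusion \emph{is} the content of the theorem. For comparison, the paper itself does not write out a proof either: it states that the theorem can be proved ``using most of the ideas'' of the \textbf{NP}-completeness proof of \textsc{Graph-Motion-Planning-With-One-Robot} by Papadimitriou et al.~\cite{PR}, with full details deferred to Trakultraipruk~\cite{Tr}. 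That intended route is also more economical than yours: the special token plays the robot and the fillers play the obstacles, so all the SAT-gadget engineering is inherited from~\cite{PR}; the only genuinely new issue is that in \textsc{Shortest-Token-Moves-Sequence} the fillers' final positions are prescribed (as a set), whereas in the motion-planning problem the obstacles may end anywhere. A complete proof must therefore open up the construction of~\cite{PR} and arrange a canonical final placement of the obstacles within the budget (or otherwise absorb the prescribed-targets constraint), rather than rebuild variable and clause gadgets from scratch --- and it must still carry out the quantitative soundness argument that your proposal leaves open.
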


\noindent
It is possible to prove this using most of the ideas from the proof in
Papadimitriou et al.~\cite{PR} that
`\textsc{Graph-Motion-Planning-With-One-Robot}' is \textbf{NP}-complete.
Motion planning of robot(s) on graphs is very closely related to
transformations between token configurations on graphs. Except now there
are some special tokens, the `robots', that have to be moved from an
initial position to a specific final position, while all other tokens are
just `obstacles', and their final position is not relevant. The full
details of the proof of Theorem~\ref{short-th2} will appear in
Trakultraipruk~\cite{Tr}.

\subsection{Unlabelled Tokens with Restrictions}

If we consider the token problems in the previous subsection for the case
that all tokens are identical, then there is very little to prove. The
puzzle graph $\puz(G;k)$ (with $k\le|V(G)|$) is connected if and only if
$k=|V(G)|$ or~$G$ is connected. More specifically, two token configurations
are in the same component of $\puz(G;k)$ if and only if they have the same
number of tokens on each component of~$G$. Even finding the minimum number
of steps to go from one given configuration to another can be done in
polynomial time. (Of course, this does not mean that questions about other
properties of this kind of reconfiguration graphs cannot be interesting;
see for instance Fabila-Monroy et al.~\cite{FF}.)

But the situation changes drastically if only certain positions of tokens
are allowed. The following problem was studied in Hearn~\&
Demaine~\cite{HD}. Recall that a \emph{stable set} in a graph is a set of
vertices so that no two in the set are adjacent.

\bigskip
\begin{probl}{Stable-Sliding-Token-Configurations}
\item[\normalfont\textit{Instance}:\hfill] A graph~$G$, and two token
  configurations on~$G$ using identical tokens so that the set of occupied
  vertices for both configurations forms a stable set in~$G$.
\item[\normalfont\textit{Question}:\hfill] Is it possible to transform the
  first given configuration into the second one by a sequence of moves of
  one token along an edge, and such that in every intermediate
  configurations the set of occupied vertices is a stable set?
\end{probl}

\begin{theorem}[\normalfont Hearn~\&
  Demaine~\cite{HD}]\label{HD-th1}\quad\\*
  The problem \textsc{Stable-Sliding-Token-Configurations} is
  {\normalfont\textbf{PSPACE}}-complete,\linebreak
  even when restricted to planar graphs with maximum degree three.
\end{theorem}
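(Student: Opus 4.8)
The plan is to establish \textbf{PSPACE}-completeness of \textsc{Stable-Sliding-Token-Configurations} via the two customary halves: membership in \textbf{PSPACE}, which is essentially free, and \textbf{PSPACE}-hardness by a reduction from a known \textbf{PSPACE}-complete reconfiguration problem. Membership follows immediately from the general framework set up in Section~\ref{sec-intro}: verifying that a candidate configuration is a stable set (assumption~A1) and that two configurations differ by a single legal token slide (assumption~A2) are both polynomial-time checks, so the guess-and-forget non-deterministic algorithm runs in polynomial space, placing the problem in \textbf{NPSPACE} $=$ \textbf{PSPACE} by Savitch's Theorem. The entire difficulty therefore lies in the hardness direction.

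For hardness, the approach I would take is to exhibit a polynomial-time reduction from a \textbf{PSPACE}-complete problem in the ``nondeterministic constraint logic'' family, which is the natural source for sliding-block and token-sliding lower bounds and is the tool Hearn~\& Demaine developed precisely for this purpose. Concretely, I would reduce from the reconfiguration version of a constraint-graph satisfiability problem (equivalently, from a generic \textbf{PSPACE}-complete problem such as \textsc{Quantified Boolean Formula} routed through constraint logic). The idea is to build gadgets out of small graph fragments in which the only freedom is the position of a sliding token subject to the stable-set (independence) constraint, so that a single token in a gadget encodes a Boolean state, and adjacent gadgets force exactly the logical relationships (AND-vertices, OR-vertices, and the direction-reversal edges of constraint logic). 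The key is that a token can slide to a neighbouring empty vertex only when doing so keeps the occupied set independent; this ``room to move'' condition is exactly what lets one simulate the flipping of a constraint-logic edge orientation, and a sequence of legal slides corresponds precisely to a legal reconfiguration sequence in the source problem.

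The main obstacle, and the part requiring the most care, is engineering the gadgets so that they (i) faithfully enforce the constraint-logic semantics, (ii) admit a token slide if and only if the corresponding constraint move is legal, and (iii) crucially, are \emph{planar and of maximum degree three}, since the theorem asserts hardness even under those restrictions. Controlling the degree is delicate because natural AND/OR gadgets want high-degree junction vertices; one typically splits such junctions into chains of degree-three vertices and threads tokens through them, verifying that no unintended slide sequence can ``leak'' a token across a gadget boundary and corrupt the encoding. Planarity is then maintained by laying out the constraint graph in the plane (constraint graphs can be taken planar for the relevant \textbf{PSPACE}-complete variant) and realising each crossing-free edge and vertex by its local gadget, with wire gadgets (long induced paths carrying a single mobile token) connecting them. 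I would also need ``crossover'' gadgets only if the source constraint graph is not already planar; using a planar version of constraint logic avoids this.

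Having set up the gadgets, the remaining steps are routine but must be stated: show the reduction is polynomial-time (the gadget sizes and their number are linear in the size of the source instance); prove the forward direction (a legal reconfiguration in the source yields a legal sequence of stable-set-preserving token slides by concatenating the per-gadget slide sequences); and prove the reverse direction (any legal token-slide reconfiguration projects, gadget by gadget, onto a legal source reconfiguration, using the faithfulness established above to rule out spurious moves). Combining \textbf{PSPACE}-membership with this reduction from a \textbf{PSPACE}-complete problem yields \textbf{PSPACE}-completeness, even for planar graphs of maximum degree three, as claimed.
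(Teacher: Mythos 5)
Your proposal matches the paper's approach exactly: the survey attributes this theorem to Hearn~\& Demaine and explains that their proof proceeds via the non-deterministic constraint logic (NCL) framework, whose reconfiguration problem (Theorem~\ref{HD-th2}) is \textbf{PSPACE}-complete even for planar, degree-three constraint graphs, which is precisely the source problem and gadget-based reduction strategy you describe, combined with the routine \textbf{PSPACE}-membership argument from Section~\ref{sec-intro}. Since the paper itself only sketches this route (deferring gadget details to~\cite{HD}), your blind reconstruction is faithful to it.
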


\noindent
The proof of this theorem in~\cite{HD} (and many other results in that
paper) rely on a powerful general type of problem that seems to be very
suitable for complexity theoretical reductions. A \emph{non-deterministic
  constraint logic machine} (\emph{NCL machine}) consists of an undirected
graph, together with assignments of non-negative integer weights to its
edges and its vertices. A feasible configuration of an NCL machine is an
orientation of the edges such that the sum of incoming edge-weights at each
vertex is at least the weight of that vertex. A move is nothing other than
reversing the orientation of one edge, guaranteeing that the resulting
orientation is still a feasible configuration.

The following is a natural reconfiguration question for NCL machines.

\bigskip
\begin{probl}{NCL-Configuration-to-Edges}
\item[\normalfont\textit{Instance}:\hfill] An NCL machine, a feasible
  configuration on that machine, and a specific edge of the underlying
  graph.
\item[\normalfont\textit{Question}:\hfill] Is there a sequence of moves
  such that all intermediate configurations are feasible, and ending in a
  feasible configuration in which the specified edge has its orientation
  reversed?
\end{probl}

\begin{theorem}[\normalfont Hearn~\&
  Demaine~\cite{HD}]\label{HD-th2}\quad\\*
  The problem \textsc{NCL-Configuration-to-Edges} is
  {\normalfont\textbf{PSPACE}}-complete, even when restricted to NCL
  machines in which the underlying graph is planar, all vertices have
  degree three, all edge-weights are~1 or~2, and all vertex weights are~2.
\end{theorem}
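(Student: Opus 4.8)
The plan is to treat the two halves of the statement separately. Membership in \textbf{PSPACE} is immediate from the general framework of Section~\ref{sec-intro}: an orientation of the edges can be tested for feasibility in polynomial time (one checks at each vertex that the incoming edge-weights sum to at least the vertex weight, which is exactly assumption~A1), and two configurations are adjacent precisely when they differ in the orientation of a single edge, which is trivially checkable (assumption~A2). The problem is a mild variant of \textsc{A-to-B-Path} in which the target is not a single configuration but \emph{any} feasible configuration in which the specified edge has been reversed; the nondeterministic ``guess the next configuration'' algorithm, halting as soon as such a configuration is reached, still runs in polynomial space. Hence the problem is in \textbf{NPSPACE}, and so in \textbf{PSPACE} by Savitch's Theorem. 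All the real work lies in the \textbf{PSPACE}-hardness for the restricted class of machines described.

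For the hardness I would reduce from \textsc{QBF}, the satisfiability problem for a quantified Boolean formula $\exists x_1\,\forall x_2\,\exists x_3\cdots\,\phi$, which is a standard \textbf{PSPACE}-complete problem. The core of the construction is a small library of gadgets built from only two vertex types that already meet every restriction in the statement. An \emph{AND vertex} is a degree-three vertex of weight~2 incident to one edge of weight~2 and two edges of weight~1; its constraint forces the weight-2 edge to be directed outward only when both weight-1 edges are directed inward. An \emph{OR vertex} is a degree-three vertex of weight~2 incident to three edges of weight~2; here a chosen output edge may be directed outward precisely when at least one of the other two edges is directed inward. Reading ``edge directed away from the vertex'' as ``signal present'', these gadgets realise exactly logical conjunction and disjunction, and every vertex used has degree three, edge-weights in $\{1,2\}$, and vertex-weight~2.

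With these primitives I would assemble the reduction in layers. First, wire AND and OR vertices into an acyclic circuit computing $\phi$, so that a designated output edge can be reversed exactly when $\phi$ evaluates to true under the current variable settings. Second, attach to each $x_i$ a latch-like \emph{variable gadget} that lets the reconfiguration flip that variable's truth value back and forth. Third, add \emph{quantifier gadgets} enforcing the alternation $\exists/\forall$: an existential variable may be set freely, whereas a universal variable must be verified to keep $\phi$ satisfiable under \emph{both} of its settings before the computation may proceed. Taking the circuit's output edge as the target, the gadgets are arranged so that this edge can be reversed from the start configuration if and only if the quantified formula is true. Since every gadget is composed solely of the AND and OR vertices above, the machine automatically satisfies all the stated restrictions except possibly planarity.

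The main obstacle is exactly that last caveat. The circuit-and-quantifier graph is in general highly non-planar, and rerouting edges naively would destroy the signal semantics. What is needed is a \emph{crossover gadget}: a planar sub-machine, again made only of AND and OR vertices, with two input/output terminal pairs so that the signal on each pair is transmitted faithfully across the gadget, independently of the other, in every reachable configuration. Designing such a gadget, and proving it behaves correctly not merely in the final states but throughout \emph{every} intermediate reconfiguration so that no spurious coupling between the two channels is ever created, is the delicate part of the argument. Once a correct crossover is available, each crossing in a planar drawing of the machine is replaced by a copy of it, yielding an equivalent planar machine in the required class and completing the reduction.
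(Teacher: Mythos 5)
The paper itself contains no proof of Theorem~\ref{HD-th2}: this is a survey, and the result is stated as quoted from Hearn~\& Demaine~\cite{HD}. Measured against that cited source, your outline reproduces its strategy faithfully: membership in \textbf{PSPACE} via the generic nondeterministic search of Section~\ref{sec-intro} together with Savitch's Theorem, and hardness by a reduction from QBF assembled from the two constraint-logic primitives (the AND vertex with edge weights $1,1,2$ and the OR vertex with three weight-$2$ edges, all of degree three and vertex weight~$2$), with planarity recovered by substituting a crossover gadget at each crossing of a drawing. So as a comparison of approaches, yours is essentially the approach of the work the paper defers to, and your identification of the AND/OR vertex semantics is correct.

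As a proof, however, the proposal has a genuine gap, which you yourself flag: the variable gadgets, the universal-quantifier gadget, and above all the planar crossover gadget are postulated rather than constructed, and their construction together with the correctness arguments is the entire technical content of~\cite{HD}. Two points deserve emphasis. First, in constraint logic nothing is ``one-way'': every move is reversible, so your phrase ``acyclic circuit computing $\phi$'' glosses over the fact that one must reason about \emph{all} reachable orientations, not just forward signal propagation; the universal-quantifier gadget in particular must be shown to block the ``proceed'' signal in every reachable configuration in which the formula could fail under one of the two settings of that variable. Second, the crossover gadget must be proved never to couple its two channels in \emph{any} intermediate orientation, not merely in the intended ones; this invariant-style verification is delicate precisely because the state space of the gadget is explored nondeterministically by the reduction. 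Without these constructions and their accompanying invariants, what you have is an accurate roadmap of the Hearn--Demaine argument rather than a proof of the theorem.
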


\noindent
We return to moving tokens configuration problems. Note that in the
\textsc{Stable-Sliding-Token-Configurations} problem, the graph has a
`double' role: it determines both the allowed configurations (stable vertex
sets) and the allowed moves (sliding along an edge). A natural next
question would be what happens when one of these constraints imposed by the
graph is removed. We have already seen that if we remove the constraint
that the configurations must be stable sets, then the problem becomes easy.
But the situation is different if we remove the constraint that token
movement must happen along an edge.

\bigskip
\begin{probl}{Stable-Set-Reconfiguration}
\item[\normalfont\textit{Instance}:\hfill] A graph~$G$, and two token
  configurations on~$G$ using identical tokens so that the set of occupied
  vertices for both configurations forms a stable set in~$G$.
\item[\normalfont\textit{Question}:\hfill] Is it possible to transform the
  first given configuration into the second one by a sequence of moves of
  one token at each step, where a token can move from any vertex to any
  other vacant one, and so that in every intermediate configuration the set
  of occupied vertices is a stable set?
\end{probl}

\begin{theorem}[\normalfont Ito et al.~\cite{ID}]\label{ID-th1}\quad\\*
  The problem \textsc{Stable-Set-Reconfiguration} is
  {\normalfont\textbf{PSPACE}}-complete, even when restricted to planar
  graphs with maximum degree three.
\end{theorem}

\noindent
Since independent set problems are easily reduced to problems about
cliques, vertex covers, etc., reconfiguration problems where the vertices
occupied by a token form sets of this type are easily seen to be
\textbf{PSPACE}-complete as well. See Ito et al.~\cite{ID} for more
details.

For some other types of sets formed by occupied vertices, the corresponding
reconfiguration problems can become polynomial. A classical example of this
is the following.

\begin{theorem}[\normalfont Cummins~\cite{Cu}]\label{Cu-th1}\quad\\*
  Let~$G$ be a connected graph with positive weights on its edges. Then any
  minimum spanning tree of~$G$ can be transformed into any other minimum
  spanning tree by exchanging one edge at a time, so that each intermediate
  configuration is a minimum spanning tree as well.
\end{theorem}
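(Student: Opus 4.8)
The plan is to view the statement as a connectivity result for the reconfiguration graph whose vertices are the minimum spanning trees (MSTs) of~$G$ and whose edges join two MSTs differing by a single edge swap. I would prove connectivity by induction on the size of the symmetric difference $|T \triangle T'|$ of two given MSTs $T$ and~$T'$, showing at each step that one can perform a single \emph{weight-preserving} edge exchange on~$T$ that keeps the tree an MST and strictly decreases the symmetric difference.

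The two standard facts I would rely on are: (i) for a spanning tree~$T$ and a non-tree edge~$e$, adding~$e$ creates a unique \emph{fundamental cycle} $C_e$, and for any edge~$f$ on $C_e$ the graph $T - f + e$ is again a spanning tree; and (ii) the MST \emph{cycle property}: if~$T$ is an MST and $e \notin T$, then $w(e) \ge w(f)$ for every edge~$f$ on the path of $C_e$ lying inside~$T$ (otherwise $T - f + e$ would be a lighter spanning tree). From~(ii) it follows at once that if $w(e) = w(f)$, then $T - f + e$ has the same total weight as~$T$ and is therefore itself an MST; so a weight-preserving swap along a fundamental cycle is precisely a legal move in the reconfiguration graph.

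The crux --- and the step I expect to be the main obstacle --- is producing, for any two distinct MSTs, an equal-weight swap that moves~$T$ \emph{towards}~$T'$. Here I would let $w^*$ be the smallest edge-weight occurring in $T \triangle T'$ and first argue that some edge of weight~$w^*$ must lie in $T' \setminus T$ (and not only in $T \setminus T'$): if every weight-$w^*$ edge lay in $T \setminus T'$, then applying the cycle property in~$T'$ to such an edge would force a weight-$w^*$ edge into $T' \setminus T$, a contradiction, so that case is vacuous. Fixing such an $e \in T' \setminus T$ of weight~$w^*$, its fundamental cycle $C_e$ in~$T$ cannot lie entirely inside the acyclic~$T'$, so it contains an edge $f \in T \setminus T'$. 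The cycle property gives $w(f) \le w(e)$, while minimality of~$w^*$ over $T \triangle T'$ gives $w(f) \ge w^* = w(e)$; hence $w(f) = w(e)$.

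This yields the desired legal move $T \mapsto T - f + e$. Since $e \in T'$ and $f \notin T'$, the exchange removes one edge from each side of the symmetric difference, so $|(T - f + e) \triangle T'| = |T \triangle T'| - 2$, and the induction closes with base case $T \triangle T' = \emptyset$, i.e.\ $T = T'$. Concatenating the moves produced at each step gives the required sequence of single-edge exchanges, all through MSTs, transforming~$T$ into~$T'$.
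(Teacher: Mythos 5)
The survey itself contains no proof of Theorem~\ref{Cu-th1}: it is stated as a known result and attributed to Cummins~\cite{Cu}, so there is no in-paper argument to compare yours against; judged on its own merits, your proof is correct and complete. The two facts you invoke are standard, and you correctly identify and handle the one genuinely delicate point, namely that the minimum weight $w^*$ over $T\triangle T'$ is attained by an edge of $T'\setminus T$: if every weight-$w^*$ edge of the symmetric difference lay in $T\setminus T'$, then the fundamental cycle in $T'$ of such an edge $g$ must contain an edge $h\notin T$ (since $T$ is acyclic), and $w(h)\le w(g)=w^*$ by the cycle property in $T'$ while $w(h)\ge w^*$ by minimality, a contradiction. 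The resulting swap $T\mapsto T-f+e$ with $w(f)=w(e)$ preserves total weight (hence minimality) and reduces $|T\triangle T'|$ by exactly~$2$, so the induction closes. Two bonuses of your argument are worth noting: positivity of the weights is never used, so the statement holds for arbitrary real weights; and since each exchange can change the symmetric difference by at most~$2$, your sequence of exactly $|T\setminus T'|$ exchanges is a shortest possible one, i.e.\ a geodesic in the reconfiguration graph of minimum spanning trees. By contrast, Cummins' cited paper works with the tree graph (all spanning trees under single-edge exchange) and proves Hamiltonicity-type structure, which is a stronger conclusion obtained by heavier means; your elementary cycle-property exchange argument is the natural direct proof of the connectivity statement as quoted.
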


\noindent
Note that the reconfiguration in Theorem~\ref{Cu-th1} can be seen as a
token reconfiguration problem by playing on the line graph of~$G$.
Similarly, the following problem is essentially
\textsc{Stable-Set-Reconfiguration} played on line graphs.

\bigskip
\begin{probl}{Matching-Reconfiguration}
\item[\normalfont\textit{Instance}:\hfill] A graph~$G$, and two matchings
  of~$G$ (subgraphs of degree at most one).
\item[\normalfont\textit{Question}:\hfill] Is it possible to transform the
  first matching into the second one by a sequence of moves of one edge at
  a time, so that in every intermediate configuration the set of chosen
  edges forms a matching as well?
\end{probl}

\begin{theorem}[\normalfont Ito et al.~\cite{ID}]\label{ID-th2}\quad\\*
  The problem \textsc{Matching-Reconfiguration} is in
  {\normalfont\textbf{P}}.
\end{theorem}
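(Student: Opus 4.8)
The plan is to exploit the fact, noted just above the statement, that \textsc{Matching-Reconfiguration} is exactly the token-jumping (independent-set) reconfiguration problem played on the line graph $L(G)$: a matching of~$G$ is an independent set of~$L(G)$, and replacing one edge of the matching by another is the move ``pick up a token and drop it on a currently free vertex of $L(G)$''. Since a single move neither creates nor destroys a token, I would first assume $|\alpha|=|\beta|=m$ (otherwise the answer is trivially ``no''), and then work directly on~$G$ to use the matching structure. The basic object is the symmetric difference $D=\alpha\triangle\beta$, which, as usual for two matchings, is a vertex-disjoint union of paths and even cycles whose edges alternate between $\alpha$ and~$\beta$. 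On each component we must turn the $\alpha$-edges into the $\beta$-edges, and the whole task decouples into performing these local rotations while respecting the global matching constraint.

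First I would analyse when a single rotation can be carried out. The vertices left uncovered by the current matching play the role of the empty square in the 15-puzzle: to move a token off an edge $e=xy$ we free $x$ and~$y$, and to place it on a new edge $f$ we need both endpoints of~$f$ to be currently uncovered. Thus a matching that saturates every vertex (a perfect matching) admits no move at all, and small saturated pieces such as a $C_4$ carrying the two edges of a perfect matching are frozen; they are isolated in the reconfiguration graph. The positive engine of the proof is the opposite observation: if there is at least one uncovered vertex that can be routed, through a sequence of jumps, to a component of~$D$, then this uncovered ``hole'' can be slid along an alternating path to rotate the $\alpha$-edges of a path-component into $\beta$-edges one step at a time, and afterwards returned. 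The algorithm therefore computes~$D$, classifies its components, and decides in polynomial time whether the supply of uncovered vertices suffices, and can be moved to where it is needed, to perform every required rotation; when it does, one assembles the rotations into an explicit reconfiguration sequence, and when it does not, one certifies a tight obstruction.

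The step I expect to be the main obstacle is the treatment of the even \emph{cycle} components of~$D$ together with the bookkeeping of the uncovered vertices. A single hole suffices to rotate a path, but a saturated even cycle (the $C_4$ phenomenon) cannot be rotated in place and must first be ``broken'' by importing a free vertex from elsewhere in~$G$; proving that one such vertex always suffices, that it can be routed in and back out by legal jumps without disturbing the parts already fixed, and that the only genuine obstructions are the globally tight instances (where no free vertex can ever reach the offending component) is the technical heart. An alternative route to the same conclusion is structural: $L(G)$ is claw-free, and reconfiguration of independent sets behaves far more tamely on claw-free graphs than in general, which suggests deriving tractability from that theory instead. Either way, the verification of a single move is clearly polynomial, so the content of the theorem lies entirely in the structural characterisation of reachability, and it is this characterisation that places \textsc{Matching-Reconfiguration} in~\textbf{P} rather than merely in~\textbf{PSPACE}.
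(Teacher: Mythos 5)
The survey states this theorem without proof---it is quoted directly from Ito et al.~\cite{ID}---so there is no in-paper argument to compare yours against; your proposal has to stand on its own, and as it stands it has a genuine gap. You set the problem up correctly (token jumping on the line graph, equal sizes $|\alpha|=|\beta|$, the symmetric difference $D=\alpha\triangle\beta$ decomposing into alternating paths and even cycles, balanced paths rotatable with essentially no outside help, saturated cycles and unbalanced paths needing external free space), but the entire content of the theorem is the part you explicitly defer as ``the technical heart''. The claim that ``the only genuine obstructions are the globally tight instances'' is not a lemma one can postpone: stated precisely and proved, it \emph{is} the theorem; stated loosely, it is not even well defined, because which free vertices ``can ever reach'' a given component changes as the matching evolves, and the common edges in $\alpha\cap\beta$---which your analysis never mentions---both occupy space and may need to be moved temporarily, so the obstruction cannot be read off from the components of $D$ alone. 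Without a precise yes/no characterisation (or at least a proven decision procedure), membership in \textbf{P} has not been established.

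Two specific inaccuracies show where this bites. First, a saturated $C_4$ carrying two matching edges is \emph{not} ``isolated in the reconfiguration graph'' unless the rest of the graph gives its edges nowhere to go: in the jumping model an edge of such a cycle may jump directly to any pair of adjacent vertices that are currently uncovered, anywhere in the graph, with no routing whatsoever. Your 15-puzzle picture of a hole that must be slid adjacently towards the offending component conflates token sliding with token jumping; a correct argument must handle both mechanisms (a direct jump to a distant uncovered edge, and hole-sliding along alternating paths when only isolated free vertices exist), and must prove that the processing order of components never creates a deadlock. Second, the suggested shortcut via claw-freeness of $L(G)$ does not close the gap either: the tractability theory for independent-set reconfiguration on claw-free graphs concerns the \emph{sliding} model, whereas \textsc{Matching-Reconfiguration} as defined here is jumping, so invoking it would require additionally proving that the two models coincide on line graphs. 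For what it is worth, the cited proof of Ito et al.~\cite{ID} also works through the symmetric difference and also has as its crux the treatment of the cycle components (there in the add/remove model with a size threshold); so your plan points in the right direction, but what is missing is exactly the part that makes the theorem true.
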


\noindent
Comparing the reconfiguration problems in this subsection that are
\textbf{PSPACE}-complete with those that are in~\textbf{P}, it is tempting
to conjecture that if the related \emph{decision problem} is
\textbf{NP}-complete, then the reconfiguration problem is
\textbf{PSPACE}-complete; whereas if the related decision problem is
in~\textbf{P}, then so is the reconfiguration problem. Such a connection is
alluded to in Ito et al.~\cite{ID}. Nevertheless, in earlier sections we
have seen some examples that shows that such a direct connection is not
true. For instance, it is \textbf{NP}-complete to decide if a graph is
3-colourable, but the single-vertex recolouring reconfiguration problem is
in~\textbf{P}, Theorem~\ref{col-th1}\,(b).

We close this section with a simplified version of a question from Ito et
al.~\cite{ID}: is the \textsc{Hamilton-Cycle-Reconfiguration} problem
(where two cycles are adjacent if they differ in two edges)
\textbf{PSPACE}-complete?

\section{Applications}\label{sec-appl}

Most reconfiguration problems are interesting enough for their own sake,
and do not really need an application to justify their study. Nevertheless,
many reconfiguration problems have applications or are inspired by problems
in related areas. In this section we look at some of those applications and
connections.

\subsection{Sampling and counting}\label{subs-sc}

Randomness plays an important role in many parts of combinatorics and
theoretical computer science. Indeed, results from probability theory have
led to major developments in both fields. It is therefore unsurprising that
researchers are often interested in obtaining random samples of particular
combinatorial structures. For example, much attention has been devoted to
the problem of sampling from an exponential number of structures
(exponential in the size of the object over which the structures are
defined) in time polynomial in this quantity. One of the reasons for this
is that being able to sample almost uniformly from a set of combinatorial
structures is enough to be able to approximately count such structures. See
Jerrum~\cite{Je1} for an example illustrating the method in the context of
graph colourings, and Jerrum~\cite{Je2} and Jerrum et al.~\cite{JVV} for
full details.

The question of when the configuration graph of a reconfiguration problem
is connected is quite old. In particular the configuration graph of the
single-vertex recolouring method has been looked at, as a subsidiary issue,
by researchers in the statistical physics community studying the `Glauber
dynamics of an anti-ferromagnetic Potts model at zero temperature'. (See
Sokal~\cite{So} for an introduction to the Potts model and its many
relations to graph theory.) Associated with that research is the work on
rapid mixing of Markov chains used to obtain efficient algorithms for
almost uniform sampling of $k$-colourings of a given graph. We give a brief
description of the basic ideas involved in these areas of research.

Quite often, the sampling is done via the simulation of an appropriately
defined Markov chain. Here the important point is that the Markov chain
should be rapidly mixing. This means, loosely speaking, that it should
converge to a close approximation of the stationary distribution in time
polynomial in the size of the problem instance. For a precise description
of this concept and further details; see~\cite{Je2} again.

In the context of the particular Markov chain used for sampling
$k$-colourings of a graph known as \emph{Glauber dynamics} (originally
defined for the \emph{anti-ferromagnetic Potts model at zero temperature})
we have the following. For a particular graph~$G$ and value of~$k$, let us
denote the Glauber dynamics for the $k$-colourings of~$G$ by
$\mathcal{M}_k(G)=(X_t)_{t=0}^\infty$. The state space of
$\mathcal{M}_k(G)$ is the set of all $k$-colourings of~$G$, the initial
state~$X_0$ is an arbitrary colouring, and its transition probabilities are
determined by the following procedure.

\begin{enumerate}
\item Select a vertex~$v$ of~$G$ uniformly at random.

\vspace{-2mm}\item Select a colour $c\in\{1,2,\ldots,k\}$ uniformly at random.

\vspace{-2mm}\item If recolouring vertex~$v$ with colour~$c$ yields a
proper colouring, then set $X_{t+1}$ to be this new colouring; otherwise,
set $X_{t+1}=X_t$.
\end{enumerate}

The relation between $\mathcal{M}_k(G)$ and the single-vertex recolouring
transformations is immediate. In particular, to be sure that every
$k$-colouring can appear as some state of the Markov chain, we need
that~$G$ is $k$-mixing. Thus the fact that a graph is $k$-mixing is a
necessary condition for its Glauber dynamics Markov chain to be rapidly
mixing. (This explains the choice of terminology in Section~\ref{sec-col}.)
On the other hand, if a graph is $k$-mixing it does not mean that its
Glauber dynamics Markov chain is rapidly mixing. An example showing this is
given by the stars $K_{1,m}$, which are $k$-mixing for any $k\ge3$ (see
Theorem~\ref{DF-th1}) but whose Glauber dynamics is not rapidly mixing for
$k\le m^{1-\varepsilon}$, for fixed $\varepsilon>0$ (\L uczak~\&
Vigoda~\cite{LV}).

Let us point out that much of the work on rapid mixing of the Glauber
dynamics Markov chain (as well as that on its many generalisations and
variants) has concentrated on specific graphs, or on values of~$k$ so large
that $k$-mixing is guaranteed. Many applications in theoretical physics
related to the Potts model are of particular interest for crystalline
structures, leading to the many studies of the Glauber dynamics and its
generalisations on very regular and highly symmetric graphs such as integer
grids.

Similar to the single-vertex recolouring method, the Kempe chain
recolouring method (see Subsection~\ref{subs-Kc}) has also been used to
define a Markov chain on the set of all $k$-colourings of some graph. The
corresponding approximate sampling algorithm is known as the
\emph{Wang-Swendsen-Koteck\'y dynamics}; see \cite{WSK1,WSK2}.

\medskip
Many reconfiguration problems we have considered so far can be used to
define a Markov chain similar to the ones for vertex-colouring defined
above. Again, for such a Markov chain to be a useful tool for almost
uniform sampling and approximate counting, it is necessary that the
configuration graph is connected, leading to questions considered in this
survey.

\subsection{Puzzles and games}\label{subs-gp}

We introduced the study of token configurations on graphs by looking at the
classical 15-puzzle. But in fact, many puzzles and games can be described
as reconfiguration problems. Following Demaine and Hearn~\cite{DH} we use
the term \emph{(combinatorial) puzzle} if there is only one player, and use
\emph{(combinatorial) game} if there are two players. (So we ignore games
with more than two players, or with no players (like Conway's Game of
Life).)

The puzzles we are interested in are of the following type: ``Given some
initial configuration and a collection of allowed moves, can some
prescribed final configuration (or a final configuration from a prescribed
set) be reached in a finite number of moves?'' For a game the situation is
somewhat more difficult, and several different variants can be described. A
quite general one is: ``Given some initial configuration, a starting
player, a collection of allowed moves which the two players have to play
alternately, and a collection of winning configurations for player~1, can
player~1 force the game to always reach a winning configuration in a finite
number of moves, no matter the moves player~2 chooses?'' Another way to
describe this question is: ``Given the setup of the game and the initial
situation, is there a winning strategy for player~1?''.

With these descriptions there is an obvious relation between the type of
reconfiguration problem we considered and puzzles and games. For many
puzzles and games, both existing and specially invented, the complexity of
answering the questions above have been considered. A good start to find
the relevant results and literature in this area is the extensive survey of
Demaine and Hearn~\cite{DH}.

\subsection{Other applications}\label{subs-oa}

Some reconfiguration problems have more practical applications (leaving
aside if ``solving puzzles'' is really a practical application). In
particular, graph recolouring problems can be seen as abstract versions of
several real-life problems. One example of this is as a modelling tool for
the assignment of frequencies in radio-communication networks. The basic
aim of the \emph{Frequency Assignment Problem (FAP)} is to assign
frequencies to users of a wireless network, minimising the interference
between them and taking care to use the smallest possible range of
frequencies. Because the radio spectrum is a naturally limited resource
with a constantly growing demand for the services that rely on it, it has
become increasingly important to use it as efficiently as possible. As a
result, and because of the inherent difficulty of the problem, the subject
is huge. For an introduction and survey of different approaches and results
we refer the reader to Leese~\& Hurley~\cite{LH}.

The FAP was first defined as a graph colouring problem by Hale~\cite{Ha}.
In this setting, we think of the available frequencies (discretised and
appropriately spaced in the spectrum) as colours, transmitters as vertices
of a graph, and we add edges between transmitters that must be assigned
different frequencies. In order to better capture the subtleties of the
`real-world' problem, this basic model has been generalised in a multitude
of different ways. Typically this might involve taking into account the
fact that radio waves decay with distance obeying an inverse-square law.
For instance, numerical weights can be placed on the edges of the graph to
indicate that frequencies assigned to the end-vertices of an edge must
differ by at least the amount given by the particular edge-weight.

One of the major factors contributing to the rise in demand for use of the
radio spectrum has been the dramatic growth, both in number and in size, of
mobile telecommunication systems. In such systems, where new transmitters
are continually added to meet increases in demand, an optimal or
near-optimal assignment of frequencies will in general not remain so for
long. On the other hand, it might just be the case that, because of the
difficulty of finding optimal assignments, a sub-optimal assignment is to
be replaced with a recently-found better one. It thus becomes necessary to
think of the assignment of frequencies as a dynamic process, where one
assignment is to be replaced with another. In order to avoid interruptions
to the running of the system, it is desirable to avoid a complete
re-setting of the frequencies used on the whole network. In a graph
colouring framework, this leads naturally to the graph-recolouring problems
we looked at in Section~\ref{sec-col}.

Not much attention seems yet to have been devoted to the problem of
reassigning frequencies in a network. Some first results can be found in
\cite{BJ,BL,Han,MH}. Most of the work in that literature describes specific
heuristic approaches to the problem, often accompanied by some
computational simulations.

\medskip
As hinted at already in Section~\ref{sec-tok}, moving-token puzzles are
related to questions on movements of robots. A simple abstraction and
discretisation is to assume that one or more robots move along the edges of
a graph. There might be additional objects placed on the vertices, playing
the role of `obstacles'. The robots have to move from an initial
configuration to some target configuration. In order to pass vertices
occupied by obstacles, these obstacles have to be moved out of the way,
along edges as well.

In general it is not hard to decide whether or not the robots can actually
move from their initial to their target configuration. But for practical
applications, limiting the number of steps required is also important,
leading to problems that are much harder to answer, see e.g.\ Papadimitriou
et al.~\cite{PR}. A multi-robot motion planning problem in which robots are
partitioned in groups such that robots in the same group are
interchangeable, comparable to the sliding token problem with different
types of tokens, has recently been studied in Solovey~\&
Halperin~\cite{SH}.

The robot motion problem on graphs is closely related to certain puzzles as
well. A well-known example of such a puzzle is \emph{Sokoban}, which is
played on a square grid where certain squares contain immovable walls or
other obstacles. There is also a single `pusher' who can move certain
blocks from one square to a neighbouring one, but only in the direction the
pusher can `push'. Moreover, the pusher can only move along unoccupied
squares. The goal of the game is to push the movable blocks to their
prescribed final position. Deciding if a given Sokoban configuration can be
solved is known to be \textbf{PSPACE}-complete, as was proved in
Culberson~\cite{Cul}.

% \subsection{Relations to other areas of mathematics}\label{subs-om}

\section{Open problem}

It would be easy to end this survey with a long list of open problems: what
is the complexity of deciding the following reconfiguration problems:
\ldots~? But a more fundamental, and probably more interesting, problem is
to try to find a connection between the complexity of reconfiguration
problems and the complexity of the decision problem on the existence of
configurations of a particular kind (related to the reconfiguration problem
under consideration).

Such connections are regularly alluded to in the literature. For instance,
with regard to the complexity of satisfiability reconfiguration problems,
Gopalan et al.~\cite{GK1} conjectured that if~$\mathcal{S}$ is Schaefer,
then \textsc{Conn($\mathcal{S}$)} is in~$\textbf{P}$. (This has been
disproved since then; see Section~\ref{sec-sat}.) Similarly, Ito et
al.~\cite{ID} write ``There is a wealth of reconfiguration versions of
NP-complete problems which can be shown PSPACE-complete via extensions,
often quite sophisticated, of the original NP-completeness proofs;
\ldots'', as if there is a general connection between the complexity of
these two types of decision problem.

But the connection must be more subtle that just ``\textbf{NP}-completeness
of decision problem implies \textbf{PSPACE}-completeness of corresponding
reconfiguration problem''. For instance, it is well known that deciding if
a graph is $k$-colourable is \textbf{NP}-complete for any fixed $k\ge3$.
But deciding if two given 3-colourings of a graph are connected via a
sequence of single-vertex recolourings is in~\textbf{P} for $k=3$ and
\textbf{PSPACE}-complete for $k\ge4$; see Theorem~\ref{col-th1}.

Nevertheless, it might be possible to say more about the connection between
the complexity of certain decision problems and the complexity of the
corresponding reconfiguration problem. In particular for problems that
involve labelling of certain objects under constraints (such as
satisfiability and graph colouring), and where the allowed transformation
is the relabelling of a single object, such a connection might be
identifiable. If this is indeed the case, it might give us a better
understanding of both the original decision problem and the reconfiguration
problem.

\thankyou{The author likes to thank the anonymous referee for very careful
  reading and for suggestions that greatly improved the presentation in the
  survey.}

\myaddress

\end{document}